\let\csname equation*\endcsname\relax
\let\csname endequation*\endcsname\relax
\newcommand{\GO}[1]{ G^X_Y(z)_{#1}}
\newcommand{\GP}[2]{ G^{( #1 )}_Y(z)_{#2}}
\newcommand{\beq}{\begin{equation}}
\newcommand{\eeq}{\end{equation}}
\newcommand{\bea}{\begin{eqnarray}}
\newcommand{\eea}{\end{eqnarray}}
\newcommand{\half}{\frac{1}{2}}
\newcommand{\threehalves}{\frac{3}{2}}
\newtheorem{theorem}{Theorem}[section]
\newtheorem{proposition}[theorem]{Proposition}
\newtheorem{lemma}[theorem]{Lemma}
\begin{document}

\title[Boundary Conditions and the q-state Potts model on Random Planar Maps]{Boundary Conditions and the q-state Potts model on Random Planar Maps}

\author{Aravinth Kulanthaivelu$^{1}$, John F. Wheater$^{1}$}
\address{$^1$Rudolf Peierls Centre for Theoretical Physics, Clarendon Laboratory, Parks Road, University of Oxford, Oxford OX1 3PU, UK}
\ead{aravinth.kulanthaivelu@physics.ox.ac.uk, j.wheater@physics.ox.ac.uk}

\begin{abstract}
We extend a recent analysis of the $q$-states Potts model on an ensemble of random planar graphs 
with $p\leqslant q$ allowed, equally weighted, spins on a connected boundary.
In this paper we explore the $(q<4,p\leqslant q)$ parameter space of finite-sheeted resolvents and derive
the associated critical exponents. By definition a value of $q$ is allowed if there is a $p=1$ solution, and we
reproduce the long-known result that $q= 2(1+\cos{\frac{m}{n} \pi})$ with $m,n$ coprime. In addition we find that there are two distinct sequences of solutions, 
one of which contains $p=2$ and $p=q/2$ while the other does not. The boundary condition $p=3$ appears only for 
$q=3$ which also has a $p=3/2$ boundary condition; we conjecture that this new solution corresponds in the scaling limit to the 
`New' boundary condition, discovered on the flat lattice \cite{Affleck1998}. We also explore Kramers-Wannier duality for $q=3$ 
in this context and explicitly construct the known boundary conditions; we show that the mixed boundary condition
is dual to a boundary condition on dual graphs that corresponds to Affleck et al's identification of the 
New boundary condition on fixed lattices. On the other hand we find that the mixed 
boundary condition of the dual, and the corresponding New boundary condition of the original theory are not 
described by conventional resolvents.
\end{abstract}
\date{}
\vspace{2pc}
\maketitle

\section{Introduction/Overview}

There are two approaches to studying 2D Euclidean quantum gravity coupled to minimal matter. The continuum approach consists of Liouville CFT coupled to a minimal model CFT, while random graphs coupled to a spin system whose critical point is described by the CFT provide a discretised formulation. The matrix model provides a description of the latter in which the  fundamental degrees of freedom are the components of large $N$ Hermitian matrices taken from a random ensemble. The Feynman graph expansion of the matrix model yields the generating function for discretised random surfaces coupled to the spin degrees of freedom organised in a topological expansion.  
To make contact with the Liouville CFT, one must take an appropriate scaling limit, in which the discretised surfaces approach a continuum, and the matter system approaches a second order phase transition where the CFT describes the long range excitations of the spin system.

The one-matrix model formulates a discretisation of random surfaces only. These can be solved non-perturbatively through various methods \cite{Francesco1995} and are generally well understood. To add matter to the surfaces, more matrices are required.  A wide class of two-matrix models of this form are known to be solvable in the planar limit, where the surfaces have spherical topology, and there exist methods to compute subleading terms to all orders in the large $N$ expansion \cite{Eynard2003,Eynard2007}.
 Unfortunately, as first pointed out in \cite{Itzykson1980} and later in \cite{Kazakov2000},  multi-matrix models containing more than 2 degrees of freedom tend to be intractable. Those that can be solved use methods quite specific to the particular form of the model that do not generalise.

In this paper we study properties of the loop functions of the $q$-state Potts model coupled to 2D discretised quantum gravity, where the loop functions correspond to quantum gravity partition functions with a single boundary (i.e. a disk in the planar limit). We use the multi-matrix model formulation
\begin{equation}
\label{eq1}
d\mu(X_1,...,X_q) = \frac{1}{Z_{N,q}} \prod_{\langle i,j \rangle} e^{N \tr{X_i X_j}} \prod_{i=1}^q e^{-N\tr{V(X_i)}}dX_i.
\end{equation}
\noindent The $X_i$ are $N\times N$ Hermitian matrices, each representing one of the $q$ spin states, and $dX_i$ denotes integration over all the independent components. The potential $V(z)=U(z)-z^2/2$, where $U(z) = \sum_{k=2}^{n} t_m z^m/m$, controls the type of polygonulation of discrete 2-manifolds generated by the formal Feynman graph expansion in $\{t_2,t_3,\cdots\}$(we use $n=3$, corresponding to triangulations, as used in \cite{Carroll1996}).

At first this model seems intractable. The non-trivial cycle of couplings between matrices prevents simultaneous diagonalisation, so that some  techniques that are useful for single matrix models, such as orthogonal polynomials, are rendered inadequate \cite{Carroll1996a}. However, due to the particular structure of this partition function it is in fact solvable. Kazakov \cite{Kazakov1988} computed the solution for the case $q=1$ and $q=0$ limits and for $q=2$ it reduces to the well-known Ising model on a random lattice, for which the one-loop function has been computed  \cite{Kazakov1986,Boulatov1987}. In the Liouville CFT description, this corresponds to Liouville CFT coupled to the $c=1/2$ minimal model. Later, Daul and Zinn-Justin computed the one-loop function corresponding to the fully magnetised boundary condition on the spins and associated critical exponents for the $q=3$ case  \cite{DAUL1995,Zinn-Justin1998}. For all allowed values of $q$ Eynard and Bonnet \cite{Eynard1999} subsequently computed the one-loop function using the loop equation method.

In \cite{Atkin2015,Atkin2016} a formalism was developed to compute a larger set of loop amplitudes, including auxiliary boundary conditions on Potts spins as well as the free and partially magnetised conditions. This formulation was illustrated by application to the $q=1,2,3$ Potts models and the known results reproduced. In this paper we show how to use these methods to develop systematically the loop amplitudes for these boundary conditions at all allowed $q$-values, and compute the associated critical exponents.

This paper is organised as follows. In Section 2 we define the model and observables of the theory. In Section 3 we outline the method of determining the boundary generating (one loop) functions. In Section 4 we apply this construction to arbitrary $q$ and $p$, giving the allowed values of each, along with the degree of the discriminant in each case. Proof of the statements given in this section are largely left to the appendices. In Section 5 we outline some consequences of the previous propositions on the allowed boundary functions and derive the critical exponents of the general solution. In Section 6 we examine Kramers-Wannier duality on the random lattice, detailing the construction for both the Ising model and the 3-state Potts model, before concluding in Section 7.

\section{Defining the Model}

We use the model defined by \eref{eq1} to describe the $q$-state Potts model on a random planar lattice. 
The partition function $Z_{N,q}$ is given by the integral of \eqref{eq1} over all $q$ matrices. The free energy, given by 
\beq
F = \frac{1}{N^2} \ln{Z_{N,q}}
\eeq
\noindent is the sum over all closed triangulated 2D surfaces each weighted by the partition function of the Potts model on the corresponding dual graph. In the large $N$ limit it can be expanded in powers of $N$
\beq
F = \sum_{g=0}^{\infty} N^{-2g}F^{(g)}
\eeq
\noindent where $g$ denotes the genus of the diagrams that contribute to $F^{(g)}$. Thus the large $N$ expansion of the free energy becomes a topological expansion in the genus of the surfaces, and to leading order we only have planar diagrams \cite{Francesco1995,Ginsparg1993}.

Typically we are interested in computing the correlation functions of matrices. These are captured by the resolvents
\begin{equation}
\label{eq2}
    W_{(p)}(z) = \frac{1}{N}\langle \tr{\frac{1}{z-\sum_{k=1}^p X_k}}\rangle,
\end{equation}
\noindent where we use the $S_q$ symmetry of the model to identify the resolvent generated by $X_1$ with the resolvents generated by the other $X_i$. Therefore only sums of matrices matter. The resolvents are formal functions, defined through their asymptotic expansions in $z$. For example, the asymptotic expansion of $W_{(1)}(z)$ is given by
\beq
\label{eq3}
W_{(1)}(z) = \frac{1}{N} \langle \tr{\frac{1}{z-X_1}}\rangle = \sum_{k=0}^{\infty} \frac{T_{k}}{z^{k+1}}, \qquad T_k = \frac{1}{N} \langle \tr{X_1^k} \rangle.
\eeq
The $T_k$ moments, through the dual graph expansion, can be interpreted as the partition function of surfaces on which the boundary is composed of a length $k$ string of $X_1$ matrices. Therefore the $W_{(1)}(z)$ resolvent is a generating function for all discretised surfaces with a boundary of arbitrary length and containing only one type of matrix. For $W_{(p)}(z)$ with arbitrary orderings of $\{X_1,\cdots X_p\}$, the corresponding resolvent generates discretised surfaces with a boundary composed of the permitted matrices.

Interpreting the matrices as spins, we find that the resolvents generate the Potts model coupled to 2D surfaces with a single outer boundary (i.e. a disk for $g=0$), where the boundary admits a restricted subset of spins, given by the matrices used to define the relevant resolvent. For example, the single matrix resolvent of \eref{eq3} corresponds to the Potts model with a single spin on the boundary, and this would give the fixed spin boundary condition in the continuum CFT. In general we can map the resolvents onto the various conformal boundary conditions admitted by the Potts model CFT. We will return to some subtleties in this topic in Section 6. 

 For a given Hermitian matrix $X$ with eigenvalues $\{x_i\}_{i=1}^N$ we define the large $N$ eigenvalue density distribution
\begin{equation}
\label{eq4}
    \rho_X(x) = \lim_{N\rightarrow \infty}\frac{1}{N} \langle \sum_{i=1}^N \delta(x-x_i) \rangle,
\end{equation}
\noindent and the large $N$ resolvent is then given by the Stieltjes transform
\begin{equation}
\label{eq5}
    W_X (z) = \int_{\text{supp} \, \rho_X} dx \frac{\rho_X(x)}{z-x}, \quad z\notin \text{supp}\,  \rho_X.
\end{equation}

In the large $N$ limit, one can find the eigenvalue density distribution of the $X_i$, by computing the discontinuity of the genus zero resolvent along its compact (`physical') cut, and with this one can compute various correlation functions in the same limit. We can analogously define eigenvalue density distributions and large $N$ resolvents for any of the other resolvents given in \eref{eq2}. It is convenient to define the following functions, for a pair of Hermitian matrices $X$ and $Y$:
\bea
\label{eq7}
    G_Y^X(z) &=& \frac{1}{N} \frac{\partial}{\partial z} \ln{\langle \det_{1\leqslant k,l \leqslant N} e^{Nx_k y_l}\rangle_{y_N = z}}, \quad z\notin \text{supp}\,\rho_Y \\
%
\label{eq8}
    G_X^Y(z) &=& \frac{1}{N} \frac{\partial}{\partial z} \ln{\langle \det_{1\leqslant k,l \leqslant N} e^{Nx_k y_l}\rangle_{x_N = z}}, \quad z\notin \text{supp}\,\rho_X 
\eea
\noindent which satisfy the property $G_Y^X(G_X^Y(z)) = z + \mathcal{O}(1/N)$ \cite{Matytsin1994}. In the following we  determine the resolvents of the model in terms of these functions, to which  they are related  by an entire function
\begin{align}
\label{eq13}
    G_Y^X(z)_0 - G_Y^X(z)_{-1}=W_Y(z)_0 - W_Y(z)_{-1}, \\
    G_X^Y(z)_0 - G_X^Y(z)_{-1}=W_X(z)_0 - W_X(z)_{-1}, \nonumber
\end{align}
\noindent where the subscripts '0' refers to the physical sheet, and '-1' refers to the adjoining sheet, connected through a compact cut along the real axis.

\section{Boundary Generating Functions: Key Results}

In this section we recall the principal results of \cite{Atkin2016} and establish the procedure to determine the boundary generating functions. 
\begin{lemma}
Let $h>0$ and abbreviate the integral transformations
\bea
    \gamma_{\pm} (X) & = \int_{\mathbb{R}} dP_{\pm} f(P)e^{-\frac{N}{2}\tr{P_{\pm}^2}}e^{N\tr{P_{\pm}X/\sqrt{e^{\pm2h}-1}}} \label{eq9a}\\
    \gamma'_{\pm}(P)&=\int_{\Gamma} dX f(X) e^{N\tr{PX}\sqrt{1-e^{\mp 2h}}}\label{eq9b}
\eea
where the subscripts below the integrals indicate the integration cycle for the corresponding eigenvalues. Then, up to an overall factor, the partition function in \eref{eq1} can be written as
\begin{align}
\label{eq10a}
Z_{N,q} & = \int_{\mathbb{R}} dP_{+} e^{-\frac{N}{2} (1-e^{-2h})\tr{P_+^2}}(\gamma'_+ [e^{-N\tr{U}}](P_+))^q \\
& \label{eq10b} = \int_{\mathbb{R}} dX_0 \gamma_+ [(\gamma'_+ [e^{-N\tr{U}}])^p] (X_0) \gamma_- [(\gamma'_- [e^{-N\tr{U}}])^{q-p}] (X_0) \\
& \label{eq10c} = \int_{\mathbb{R}} dX_0 \bigg(\prod_{i=1}^q \int_\Gamma dX_i e^{-N\tr{U(X_i)}} \nonumber \\
& \quad \times \gamma_+[1]\bigg(X_0 + 2 \sinh{h}\sum_{i=1}^p X_i\bigg) \\
& \quad \times \gamma_-[1]\bigg(X_0 - 2\sinh{h}\sum_{i=p+1}^q X_i\bigg). \nonumber
\end{align}
\end{lemma}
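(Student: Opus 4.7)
The plan is to show that each of the three expressions (\ref{eq10a}), (\ref{eq10b}), (\ref{eq10c}) equals $Z_{N,q}$ up to a normalisation, by direct manipulation of the Gaussian integrals defining $\gamma_\pm$ and $\gamma'_\pm$. The most transparent order is to prove (\ref{eq10b}) $=$ (\ref{eq10c}) first, then (\ref{eq10a}) $=$ (\ref{eq10b}), and finally match (\ref{eq10a}) against \eref{eq1} by a Hubbard--Stratonovich step.

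For (\ref{eq10b}) $=$ (\ref{eq10c}), I substitute the definition of $\gamma_+$ into the first factor of (\ref{eq10b}) and expand the $p$-th power as a product of independent $X_i$ integrals over $\Gamma$. The resulting quadratic-in-$P_+$ exponent has linear coefficient $X_0/\sqrt{e^{2h}-1}+\sqrt{1-e^{-2h}}\sum_{i=1}^p X_i$; applying $\sqrt{1-e^{-2h}}=e^{-h}\sqrt{e^{2h}-1}$ and $(e^{2h}-1)e^{-h}=2\sinh h$ collapses this to $(X_0+2\sinh h\sum_{i=1}^p X_i)/\sqrt{e^{2h}-1}$, so that the inner $P_+$ integral is exactly $\gamma_+[1](X_0+2\sinh h\sum_{i=1}^p X_i)$. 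The parallel calculation for the $\gamma_-$ factor uses $\sqrt{1-e^{2h}}\sqrt{e^{-2h}-1}=-2\sinh h$, where the minus sign arises because both square roots are imaginary for $h>0$, and this precisely reproduces the second factor of (\ref{eq10c}).

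For (\ref{eq10a}) $=$ (\ref{eq10b}), I write the $\gamma_\pm$ in (\ref{eq10b}) as explicit $P_\pm$ integrals. The Gaussian $X_0$ integral then pins $P_+/\sqrt{e^{2h}-1}+P_-/\sqrt{e^{-2h}-1}=0$, equivalently $P_-=-ie^{-h}P_+$ on the contours selected by the $\pm$ subscripts (this is the sole purpose of distinguishing those contours). Substituting gives $\tr P_+^2+\tr P_-^2=(1-e^{-2h})\tr P_+^2$, which is the Gaussian weight in (\ref{eq10a}); a parallel computation $P_-\sqrt{1-e^{2h}}=P_+\sqrt{1-e^{-2h}}$ yields $\gamma'_-[e^{-N\tr U}](P_-)=\gamma'_+[e^{-N\tr U}](P_+)$, so that $(\gamma'_+)^p(\gamma'_-)^{q-p}$ condenses to $(\gamma'_+[e^{-N\tr U}](P_+))^q$, recovering (\ref{eq10a}).

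Finally, to identify (\ref{eq10a}) with \eref{eq1}, expand the $q$-th power as $\prod_{i=1}^q\int_\Gamma dX_i\,e^{-N\tr U(X_i)}e^{N\sqrt{1-e^{-2h}}\tr P_+X_i}$ and complete the square in $P_+$. The Gaussian $P_+$ integral generates $e^{\frac{N}{2}\tr(\sum_i X_i)^2}$, and $\frac12(\sum_i X_i)^2=\frac12\sum_i X_i^2+\sum_{i<j}X_iX_j$ combined with $\prod_i e^{-N\tr U(X_i)}$ reconstitutes $\prod_i e^{-N\tr V(X_i)}\prod_{i<j}e^{N\tr X_iX_j}$, the integrand of \eref{eq1}. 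The only delicate point throughout is the consistent bookkeeping of the imaginary factors arising from $\sqrt{e^{-2h}-1}$ and $\sqrt{1-e^{2h}}$: these are exactly what conspire to turn the mixed-sign structure of (\ref{eq10c}) into a real, positive Gaussian in (\ref{eq10a}) and dictate the cycle prescriptions implicit in the $P_\pm$ subscripts. Once those contours are fixed, every remaining step is an elementary Gaussian integration whose normalisation is absorbed into the stated ``overall factor''.
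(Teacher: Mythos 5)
The paper does not actually prove this lemma: Section~3 explicitly \emph{recalls} it from \cite{Atkin2016}, so there is no in-text argument to compare against. Your proposal supplies the missing derivation, and it is correct; it is also essentially the standard route taken in the cited reference, namely Hubbard--Stratonovich decoupling of the pairwise couplings via the auxiliary matrices $P_\pm$ and $X_0$. All the algebra checks out: the identities $\sqrt{1-e^{-2h}}=e^{-h}\sqrt{e^{2h}-1}$ and $(e^{2h}-1)e^{-h}=2\sinh h$ do collapse the linear terms to $\gamma_\pm[1]\big(X_0\pm 2\sinh h\sum X_i\big)$, the branch bookkeeping giving $\sqrt{1-e^{2h}}\sqrt{e^{-2h}-1}=-2\sinh h$ is consistent, the constraint $P_-=-ie^{-h}P_+$ yields $\tr P_+^2+\tr P_-^2=(1-e^{-2h})\tr P_+^2$ and $\gamma'_-(P_-)=\gamma'_+(P_+)$, and the final Gaussian integration reconstitutes $e^{\frac N2\tr(\sum_iX_i)^2}=\prod_{i<j}e^{N\tr X_iX_j}\prod_ie^{\frac N2\tr X_i^2}$, which combines with $e^{-N\tr U}$ into $e^{-N\tr V}$. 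The one step you should state more carefully is the passage from \eref{eq10b} to \eref{eq10a}: interpreting $\int dX_0\,e^{N\tr X_0A}$ as $\delta(A)$ requires $A$ to be anti-Hermitian on the chosen cycles, which is precisely why $P_-$ must run over a rotated (anti-Hermitian) contour; your parenthetical that this is the ``sole purpose'' of the cycle prescriptions is slightly too strong, since those cycles are also what make the individual $\gamma_-,\gamma'_-$ transforms well defined, but in the formal-power-series setting in which these matrix integrals live this does not affect the validity of the identity up to the stated overall factor.
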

With these integral transforms, the coupling terms between matrices in the action \eref{eq1} can be decomposed into simpler expressions with the introduction of new fiducial matrices. These new matrices, $P_+, P_-$ and $X_0$, couple to separate sums of the original matrices, as well as each other. We can then use the $S_{(q)}$ invariance of the model to isolate the relevant resolvent after application of the saddle-point method in the large $N$ limit. This leads to the following:
\begin{proposition}
Let the random matrix $P_+$ be defined as in Lemma 3.1, and set $Y=\sqrt{1-e^{-2h}}P_+$. Then for $N\rightarrow \infty$, the spectral density of the sum of $p$ matrices distributed according to \eref{eq1} is given by
\beq
\label{eq11}
\rho_{(p)}(z) = \frac{1}{2\pi i} [G_{(p)}^Y(z)_0 - G_{(p)}^Y(z)_{-1}], 
\eeq
where $G_{(p)}^Y(z)$ is the functional inverse of 
\beq
\label{eq12}
G_Y^{(p)}(z)_0 = \frac{p}{q}(z-W_Y(z)_{-1})+\frac{q-p}{q} W_Y (z)_0,
\eeq
\end{proposition}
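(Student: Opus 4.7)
My plan is to extract $\rho_{(p)}$ from the partition function representation (\ref{eq10a}), which already isolates $P_+$ and packages the matter content in the $q$-th power of a single integral. First I would insert an auxiliary source that tags the eigenvalues of the sum $\sum_{i=1}^p X_i$: rewriting the $q$-th power as a product of the ``tracked'' factor $(\gamma'_+[e^{-N\tr U}])^p$ and the ``untracked'' factor $(\gamma'_+[e^{-N\tr U}])^{q-p}$, then differentiating with respect to the source to bring the observable down. Exactly as in the derivation of (\ref{eq10b})--(\ref{eq10c}) from Lemma 3.1, this separates the $p$ matrices to be summed from the remaining $q-p$.

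Next I would perform the rescaling $Y=\sqrt{1-e^{-2h}}\,P_+$ so that the measure on the fiducial matrix has a standard Gaussian weight and $Y$ plays the role of the conjugate spectral variable. In the large-$N$ limit the $P_+$ (equivalently $Y$) integral is dominated by a saddle point determined by the HCIZ-type determinants inside the two $\gamma'_\pm$ factors; their logarithmic $z$-derivatives are precisely the functions $G_Y^X(z)$ and $G_X^Y(z)$ introduced in (\ref{eq7})--(\ref{eq8}), with the sheet-structure identity (\ref{eq13}) fixing how physical and adjoining branches appear. For a single matrix $X$ coupled to $Y$, the matrix-model saddle condition can be rewritten as $G_Y^X(z)_0 = z - W_Y(z)_{-1}$ on the cut of the tracked eigenvalue and $G_Y^X(z)_0 = W_Y(z)_0$ away from it; this is the key intermediate identity I would establish before taking powers.

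Having set this up, the saddle equation for the mixed configuration is obtained by averaging these two single-matrix relations with the combinatorial weights dictated by how many matrices are on each side, i.e. $p$ tracked and $q-p$ untracked out of $q$ in total. Summing yields
\begin{equation*}
G_Y^{(p)}(z)_0 \;=\; \frac{p}{q}\bigl(z - W_Y(z)_{-1}\bigr) + \frac{q-p}{q}\,W_Y(z)_0,
\end{equation*}
which is (\ref{eq12}). Functional inversion, using the property $G_Y^X(G_X^Y(z))=z+\mathcal{O}(1/N)$ quoted after (\ref{eq8}), produces $G_{(p)}^Y(z)$, whose discontinuity across its physical cut (by the Stieltjes inversion formula applied as in (\ref{eq5})) is the sought spectral density, giving (\ref{eq11}).

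The main obstacle is not the final Stieltjes step but the bookkeeping of Riemann sheets in the saddle-point stage: one must argue carefully why the $p$ tracked matrices pull $W_Y$ onto the adjoining sheet (producing the $z-W_Y(z)_{-1}$ piece) while the $q-p$ untracked matrices keep $W_Y$ on the physical sheet, and why the relative weighting is exactly $p/q$ versus $(q-p)/q$ rather than, say, $p/(q-1)$ after self-consistency. This hinges on how the source insertion modifies the saddle condition for $Y$, together with the sheet-matching identity (\ref{eq13}) that guarantees the two branches of $G_Y^X$ differ from those of $W_Y$ only by an entire function; once that is in place, the combinatorial average and the subsequent inversion go through essentially formally.
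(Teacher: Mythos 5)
Your scaffolding -- the Lemma 3.1 decomposition, the rescaling to $Y$, the large-$N$ saddle point for the fiducial matrix, the cut identity \eref{eq13} and a final Stieltjes inversion -- is the same route the paper's sketch relies on, but the step that carries the entire content of the proposition is asserted rather than derived, and the ``key intermediate identity'' you propose to base it on is false. For a single matrix, $G_Y^X(z)_0$ cannot equal $z-W_Y(z)_{-1}$ ``on the cut'' and $W_Y(z)_0$ ``away from it'': it is one analytic function on the physical sheet, so agreement with $W_Y(z)_0$ on any open set off the cut would force agreement everywhere, contradicting \eref{eq12} at $p=1$; and on $\mathrm{supp}\,\rho_Y$ your first relation would require $W_Y(z)_0+W_Y(z)_{-1}=z$, the pure-Gaussian saddle condition, which fails here because $Y$ interacts with all $q$ Potts matrices. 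The two relations you write are really the $p=q$ and $p=0$ endpoints of \eref{eq12}, not properties of the single-matrix function in different regions of $z$, and ``averaging them with combinatorial weights $p/q$ and $(q-p)/q$'' is exactly the statement to be proved -- as you concede yourself when you say you cannot rule out a weight like $p/(q-1)$. That is a genuine gap, located at the heart of the proposition.

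What closes it is the explicit saddle-point equation for the $Y$-eigenvalues, not a weighted average. In the eigenvalue representation of \eref{eq10a} each of the $q$ HCIZ factors removes one power of the Vandermonde $\Delta(y)$, leaving $\Delta(y)^{2-q}$, so stationarity for $z\in\mathrm{supp}\,\rho_Y$ reads $z=(2-q)\,\frac{1}{2}\big(W_Y(z)_0+W_Y(z)_{-1}\big)+q\,\frac{1}{2}\big(G_Y^X(z)_0+G_Y^X(z)_{-1}\big)$; combined with the discontinuity matching \eref{eq13} this is a linear system whose unique solution is $G_Y^X(z)_0=\frac{1}{q}\big(z-W_Y(z)_{-1}\big)+\frac{q-1}{q}W_Y(z)_0$, i.e. \eref{eq12} at $p=1$, with the weights emerging from the $(2-q)$ and $q$ coefficients rather than being put in by hand. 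For general $p$ one then needs the split \eref{eq10b}--\eref{eq10c}, in which the fiducial matrix couples through $\gamma_+$ to $\sum_{i\leqslant p}X_i$ and through $\gamma_-$ to the remaining $q-p$ matrices, together with the $S_q$ symmetry; equivalently one establishes $G_Y^{(p)}(z)_0=\sum_{i=1}^{p}G_Y^{X_i}(z)_0-(p-1)W_Y(z)_0$, which reproduces the $\frac{p}{q}$, $\frac{q-p}{q}$ combination. Finally, to obtain \eref{eq11} you should say explicitly that the analogue of the second line of \eref{eq13} for $X_{(p)}$ identifies the discontinuity of the inverse function $G_{(p)}^Y$ across its compact cut with that of $W_{(p)}$; that identification, not Stieltjes inversion alone, is what turns the inverted function into the spectral density.
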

\noindent where the superscript $(p)$ refers to the sum of matrices being studied, i.e. if we take $X_{(p)}=\sum_{k=1}^p X_k$, then $G^{X_{(p)}}=G^{(p)}$.

This argument rests on the fact that these $G$-functions have the same cut on the physical plane as the $W_Y(z)$ resolvents \eqref{eq13}. Using the result of this proposition, once we calculate the $G$-functions, we can determine the eigenvalue density distribution, from which one can calculate all of the observables of the theory.

To determine the $G$-functions, we use the saddle-point equation about the eigenvalues for a given $X$ matrix in \eref{eq1}:
\beq
\label{eq14}
U'(z) = W_X(z)_{-} + G_X^Y(z)_0.
\eeq
In analysing the $p=1$ case for integer $q$, we assume $W_X(z)_-$ has an asymptotic expansion such that $\lim_{z\rightarrow\infty}zW_X(z)=\mathcal{O}(1)$. This can be seen from the formal definition of the resolvent\eref{eq2}. Then, we can immediately infer from \eref{eq14} that $G_Y^X(z)$ has two sheets, connected by a semi-infinite square-root branch cut (taking $U(z)$ to be cubic), which we denote $C_\infty$. Furthermore Proposition 3.2 implies that $G_Y^X(z)$ has at least two sheets connected by a compact cut, which we denote $C_F$. If the resulting branch structure does not terminate there, we can analytically continue to the other sheets, and use Proposition 3.2 to write down expressions for $G_Y^X(z)$ on all of its other sheets. 

This derivation fixes the undetermined coefficients in the asymptotic expansion of the resolvent $W_X(z)$ by demanding that the $G$-functions have finitely many sheets. Starting from the physical sheet, `0', we analytically continue through $C_F$, to the next sheet, then $C_\infty$ to the sheet above, and repeating in this way, we eventually reach a sheet on which there is no longer a branch cut through which we can analytically continue the function to higher sheets.

\section{The $G$-functions for arbitrary $(p,q)$} 
\begin{proposition}
The only values of $q$ for which the $p=1$ generating function can be described by an algebraic equation are given by
\begin{equation}
\label{eq15}
    q= 2(1+\cos{\nu \pi}) \in (0,4),
\end{equation}
\noindent where $\nu = n/m$ and $n,m$ are coprime.
\end{proposition}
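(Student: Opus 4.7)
The plan is to analyze the branched structure of $G_Y^X(z)$ in terms of the monodromies around the two types of cut---the compact cut $C_F$ and the semi-infinite cut $C_\infty$---and show that these generate a group whose finiteness (equivalently, the condition that $G_Y^X$ has finitely many sheets, i.e.\ satisfies an algebraic equation) forces the rationality of $\nu$.

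First I would make the $C_F$-monodromy explicit. For $p=1$, Proposition 3.2 gives
\begin{align*}
q\,G_Y^X(z)_0 \;=\; z - W_Y(z)_{-1} + (q-1)\,W_Y(z)_0,
\end{align*}
and \eqref{eq13} shows that $G_Y^X$ inherits its compact cut from $W_Y$, so crossing $C_F$ acts by interchanging $W_Y(z)_0$ and $W_Y(z)_{-1}$. This produces the companion formula on the adjoining sheet and, after subtracting the inhomogeneous $z/q$ term, yields a linear involution $\sigma_F$ on the pair $\bigl(W_Y(z)_0,\,W_Y(z)_{-1}\bigr)$. Next I would derive the $C_\infty$-monodromy: using that $G^Y_X$ is the functional inverse of $G^X_Y$, the saddle-point equation \eqref{eq14} translates into a relation swapping the two sheets of $G^Y_X$ joined along the semi-infinite cut produced by the quadratic leading behaviour of $U'(z)$. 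Pulling this relation back to $G_Y^X$ and stripping off the $U'(z)$ polynomial tail gives a second linear involution $\sigma_\infty$ on the same data.

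The composition $\sigma_F\,\sigma_\infty$ is then conjugate to a rotation by an angle $\alpha$, and a direct computation identifies $2\cos\alpha = q-2$. The branched structure of $G_Y^X$ coincides with the orbit of the physical sheet under the dihedral group $\langle\sigma_F,\sigma_\infty\rangle$, which is finite---so that the continuation procedure described after Proposition 3.2 terminates at a sheet with no residual cut---precisely when $\alpha/\pi\in\mathbb{Q}$. Writing $\alpha=\nu\pi$ with $\nu=n/m$ in lowest terms and solving $2\cos\nu\pi = q-2$ yields
\begin{align*}
q \;=\; 2(1+\cos\nu\pi),
\end{align*}
with the range $q\in(0,4)$ corresponding exactly to the case of a non-degenerate, genuinely branched (non-trivially rotating) orbit.

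The main technical obstacle lies in the second step: the two monodromies become honest linear involutions only after the correct inhomogeneous pieces (the $z$ from Proposition 3.2 and the $U'(z)$ from the saddle-point equation) are subtracted, and one has to verify that what remains closes under a common two-dimensional representation in which $\sigma_F$ and $\sigma_\infty$ can both be written. This requires careful bookkeeping using the functional inversion $G^X_Y\leftrightarrow G^Y_X$ and the precise identification of which sheets of $G_Y^X$ sit across each cut. Once this representation is in place, the quantisation of $q$ reduces to the elementary observation that a rotation has finite order iff its angle is a rational multiple of $\pi$.
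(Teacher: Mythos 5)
Your two-involution picture is, at bottom, the same mechanism the paper uses: the relations obtained by crossing $C_F$ and $C_\infty$ collapse to the linear recursion \eref{eq20} among sheets, whose oscillatory characteristic roots $e^{\pm i\theta}$ with $2\cos\theta=q-2$ (see \eref{eq22}--\eref{eq23}) are exactly your rotation $\sigma_F\sigma_\infty$ of angle $\alpha$. So the skeleton is right, but the step you defer as ``careful bookkeeping'' is precisely where the proof lives, and your finiteness criterion as stated has a gap. The paper's criterion for algebraicity is not that the sheets close up into a finite dihedral orbit, but that the continuation \emph{terminates}: on some sheet the discontinuity across $C_F$ or $C_\infty$ must vanish. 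Whether that happens is not decided by the linear (rotation) part of the sheet-to-sheet maps alone: the maps are affine (they carry the inhomogeneous $z$ and $U'(z)$ pieces), the characteristic root $x=1$ is also present, and the answer depends on the specific initial data on the physical sheets. The paper settles this by solving the recursion with that initial data and exhibiting the discontinuities explicitly in \eref{eq24}--\eref{eq25}: they are proportional to $x^{2(k+1)}-1$ and $(x-1)(x^{2k+1}-1)$, so a discontinuity vanishes on some sheet if and only if $x$ is a root of unity, giving both directions of the equivalence in one stroke.

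To complete your version you would need to (a) actually construct the common two-dimensional affine representation and verify $2\cos\alpha=q-2$, which is the computation behind \eref{eq22}--\eref{eq23}; (b) show that when $\alpha/\pi$ is irrational the orbit is genuinely infinite, i.e.\ that the oscillatory coefficients of the continuation --- the cut discontinuities with the physical initial data --- never vanish, since otherwise finiteness of the group would not be necessary for finitely many sheets; and (c) show that when $\alpha/\pi$ is rational the affine maps still produce only finitely many branches (the fixed-point argument for an affine map whose rotation part has finite nontrivial order handles the inhomogeneous terms, but you do not state it), or better, that a discontinuity actually vanishes so the chain terminates as the paper requires. None of these repairs is difficult, and once made your argument reproduces the paper's; but as written the proposal asserts, rather than proves, exactly the content of \eref{eq16}--\eref{eq25}.
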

\begin{proof} This result was first observed by Eynard and Bonnet for the one-loop function in \cite{Eynard1999}. Here we establish it by computing the discontinuities across the branch cuts of the generating function and determining the values of $q$ for which they vanish. The analytic structure of $G_Y^X(z)$ is described in Fig \ref{fig:Gsheets}. Eliminating $W_Y(z)_{-1}$ between \eqref{eq12} and \eqref{eq13}, we have
\beq 
\label{eq16}
\GO{1}= z+(1-q)\GO{0}+(q-2)W_Y(z)_+ .
\eeq
Sheet $1$ is connected to sheet $0$ by $C_F$; circling $C_\infty$ gives
us
\beq 
\label{eq17}
\GO{2}= z+(1-q)\GO{-1}+(q-2)W_Y(z)_+ .
\eeq
Now we can eliminate $W_Y(z)_+$ between the last two equations to get
\beq 
\label{eq18}
\GO{2}= \GO{1}+(q-1)(\GO{0}-\GO{-1}).
\eeq
This demonstrates that we can generate all subsequent $\GO{K}$ with
$\GO{-1,0,1} $ as initial data.

\begin{figure}[h]
\centering
    \includegraphics[scale=0.85]{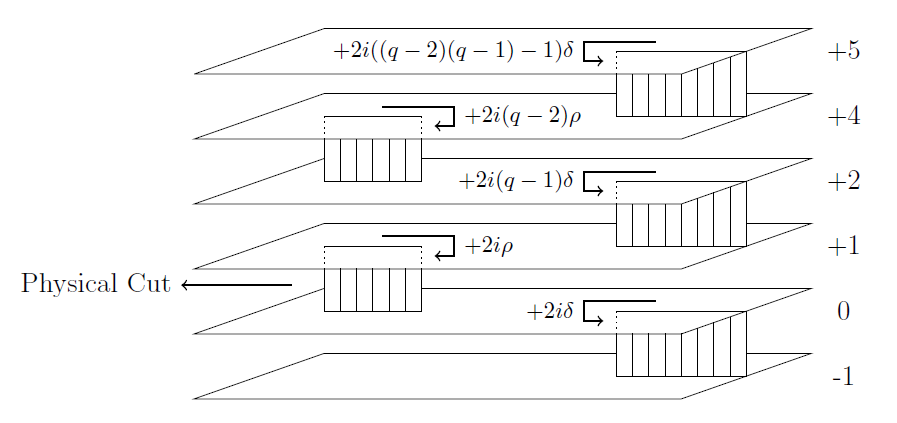}
    \caption{\label{fig:Gsheets} Analytic structure of $G_Y^X(z)$ for general $q$  in the matrix model with a cubic potential.}
\end{figure}

\noindent Circling successively $C_F, C_\infty, C_F, C_\infty,\ldots$ gives
\bea  
\label{eq19}
\GO{3} &=& \GO{0}+(q-1)(\GO{1}-\GO{-1}) \nonumber \\
\GO{4} &=& \GO{-1}+(q-1)(\GO{2}-\GO{0}) \nonumber \\
\GO{5} &=& \GO{-1}+(q-1)(\GO{3}-\GO{1}) \\
\vdots \nonumber \\
\GO{K} &=& \GO{K-6}+(q-1)(\GO{K-2}-\GO{K-4}). \nonumber
\eea
We note that this is a linear relationship amongst the $\GO{K}$s so we can read off the identical relationship for the coefficients of the discontinuous parts, $\rho_K$ and $\delta_K$ across the finite and infinite cuts respectively. We further note that the coefficient of any fixed power of $z$ also satisfies the same equation -- in particular  $\alpha_K$ , the coefficient of the $z$ itself will be important later in the computation of the discriminant.

From the above linear recursion relationship, we see that the $\rho_K$, $\delta_K$  and $\alpha_K$, defined above, all satisfy the difference equation
\beq 
\label{eq20}
y_K=(q-1)(y_{K-2}-y_{K-4})+ y_{K-6}, \quad K\geqslant 6
\eeq
with appropriate initial data generated by the expressions $\{y_0,y_2,y_4\}$ for even $K$, and $\{y_1,y_3,y_5\}$ for odd $K$. We can separate the solutions of this into those for which $K$ is an odd number and those for which $K$ is even. The solutions of this difference equation in the even case are of the form
\beq 
\label{eq21}
y_K= A x^{K/2},
\eeq
where $A$ is a constant and $x$ satisfies
\beq 
\label{eq22}
x^3-(q-1)(x^2-x)-1 =0,
\eeq
whose solutions are
\bea 
\label{eq23}
x&=&1, \nonumber \\
x&=&\frac{q-2\pm i \sqrt{4-(q-2)^2} } {2}\;=\; e^{\pm i\theta},
\eea
assuming $q\leqslant 4$. 

If the functions are algebraic, there must be a sheet for which the discontinuous parts $\rho_K$ or $\delta_K$ vanish. Using the boundary conditions for the difference equation, we have the following
\bea
\label{eq24}
    C_F&:& \begin{cases}  \rho_{2k} & = \frac{1}{\sqrt{q(q-4)}}x^{-(k+1)}(x^{2(k+1)}-1),\\
\rho_{2k+1} & = -\rho_{2k} 
\end{cases} \quad k\geqslant 1 \\
\label{eq25}
    C_{\infty}&:& \begin{cases}  \delta_{2k} & = \frac{1}{q-4}x^{-(k+1)}(x-1)(x^{2k+1}-1),\\
\delta_{2k-1} & = -\delta_{2k} 
\end{cases} \quad k\geqslant 1
\eea
%
The vanishing of $\rho_K$ or $\delta_K$ in \eqref{eq24} \eqref{eq25} fixes $x$ to be a root of unity whence, using \eqref{eq23}, we find that $q$ must satisfy \eref{eq15} and hence that $q<4$.
\end{proof}
 
\begin{proposition}
If $\theta=\nu\pi = \frac{n\pi}{m}$, with $n<m$ mutually prime, $n=1,3,\cdots$ is odd, and $m$ may be even or odd. Then the allowed values of $p$ are
\begin{equation}
\label{eq26}
    p=1+\frac{\sin{(M+1)\theta}}{\sin{M\theta}}, \quad M=1,\cdots, m-1
\end{equation}
\noindent if the sheets terminate on a compact cut, and
\begin{equation}
\label{eq27}
    p=1+\frac{\sin{(M+\frac{3}{2})\theta}}{\sin{(M+\frac{1}{2})\theta}}, \quad M=0,\cdots,m-1
\end{equation}
if the sheets terminate on an infinite cut. We denote this set of solutions Case 1.

If $\theta=\nu\pi=\frac{n\pi}{m}$, with $n<m$ mutually prime, $n=2,4,\cdots m-1$ is even, then the only allowed values of $p$ are those for which the sheets terminate on a compact cut:
\begin{equation}
\label{eq28}
    p=1+\frac{\sin{(M+1)\theta}}{\sin{M\theta}}, \quad M=1,\cdots, m-1
\end{equation}
We denote this set of solutions Case 2.
\end{proposition}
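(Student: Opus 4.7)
The plan is to extend the argument of Proposition~4.1 to arbitrary $p$. By \eqref{eq12}, $G^{(p)}_Y(z)_0$ is a $p$-weighted linear combination of the $W_Y(z)$ sheet values, and the multi-sheeted structure of $W_Y(z)$ was already fixed in the proof of Proposition~4.1. Analytic continuation of $G^{(p)}_Y(z)_0$ through the cuts $C_F$ and $C_\infty$ therefore generates higher sheets $G^{(p)}_Y(z)_K$ that remain linear combinations of $W_Y(z)$ sheet values, with $p$-dependent coefficients. Because the recursion \eqref{eq20} derived in the proof of Proposition~4.1 is a purely linear consequence of circling the cuts of $W_Y$, the same recursion governs the discontinuities $\rho^{(p)}_K$ across $C_F$ and $\delta^{(p)}_K$ across $C_\infty$ of $G^{(p)}_Y(z)$, only with initial data now given by a $p$-dependent mixing of the $p=1$ basis discontinuities \eqref{eq24} and \eqref{eq25}.

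Next I would solve this recursion. With $x=e^{i\theta}$ and standard product-to-sum identities, the closed-form discontinuities take the schematic form
\begin{align*}
\rho^{(p)}_K &\;\propto\; (q-p)\sin\!\bigl((M+\tfrac12)\theta\bigr)-p\sin\!\bigl((M-\tfrac12)\theta\bigr),\\
\delta^{(p)}_K &\;\propto\; (q-p)\sin\!\bigl((M+1)\theta\bigr)-p\sin(M\theta),
\end{align*}
where $M$ indexes the sheet. Algebraicity of $G^{(p)}_Y(z)$ is equivalent to one of these expressions vanishing first at some sheet $M$; solving the resulting linear equation for $p$ produces \eqref{eq26} (and in parallel \eqref{eq28}) when termination occurs across the compact cut, and \eqref{eq27} when it occurs across the infinite cut. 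The stated ranges of $M$ come from demanding regularity of the resulting formulas ($\sin M\theta\neq 0$, respectively $\sin(M+\tfrac12)\theta\neq 0$) and first-sheet termination.

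Finally, to separate Cases~1 and 2 I would carry out a parity analysis of $\theta=n\pi/m$. In Case~1 ($n$ odd), both termination conditions admit solutions with $M$ in the admissible range, producing both \eqref{eq26} and \eqref{eq27}. In Case~2 ($n$ even, forcing $m$ odd and coprime to $n$) the numerator $(2M+1)n$ is even but never divisible by $2m$ for any $M\in\{0,\dots,m-1\}$, so the vanishing condition for $\delta^{(p)}_K$ is never met; only the compact-cut family \eqref{eq28} survives.

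The main obstacle is the Case~2 exclusion: ruling out infinite-cut termination for every admissible $M$ rests on careful use of the coprimality of $n$ and $m$ together with the parity of $n$, and must be reconciled with the fact that certain special compact-cut values of $p$ (such as $p=q/2$) still arise in both cases. A secondary technical point is verifying that \eqref{eq26} and \eqref{eq27} list distinct values of $p$ as $M$ varies over the stated range, and that together they exhaust all algebraic boundary conditions permitted by \eqref{eq12} and the cut structure inherited from Proposition~4.1.
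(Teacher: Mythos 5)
Your setup follows the paper's route: you express the sheets of $G^{(p)}_Y(z)$ as linear combinations of the $p=1$ data, observe that the discontinuity coefficients obey the same recursion \eqref{eq20} with $p$-dependent initial conditions, and solve the vanishing conditions for $p$. Your schematic closed forms are in fact equivalent (after using $q=2+2\cos\theta$) to the paper's \eqref{eqA7}--\eqref{eqA10}, and they do reproduce \eqref{eq26} and \eqref{eq27}. Up to this point the proposal is sound, although you never address the requirement that the structure must terminate on a \emph{negative}-label sheet as well as a positive one; the paper closes this with the identity $\rho^{(p)}_{-2(m-M-1)-1}=(-1)^{n+1}\rho^{(p)}_{2M}$ (and its $\delta$ analogue), which guarantees that positive-side termination automatically implies negative-side termination. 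Without some such argument your list of allowed $p$ is only a list of necessary conditions.

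The genuine gap is your Case 2 exclusion. You argue that for $n$ even ``the vanishing condition for $\delta^{(p)}_K$ is never met'' because $(2M+1)n$ is never divisible by $2m$. That divisibility argument is the right tool for Proposition 4.1, where the discontinuities are fixed functions of $\theta$ and vanishing forces a root-of-unity condition; but for general $p$ the equation $\delta^{(p)}_{2M+1}=0$ is a \emph{linear condition on $p$} and has a solution for every admissible $M$ with $\sin(M+\tfrac12)\theta\neq 0$. So infinite-cut termination is not impossible in Case 2 --- indeed the paper shows that for $M=\tfrac{m+1}{2},\ldots,m-1$ the structure actually terminates on an infinite cut at a lower-labelled sheet. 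The correct statement is instead that when $n$ is even (so $m$ is odd) one has $\rho^{(p)}_{2(M+1)}=(-1)^{n/2}\delta^{(p)}_{2(M-k)+1}$ with $k=\tfrac{m-1}{2}$ (and similarly on negative sheets), so the infinite-cut terminations generate exactly the same set of $p$ values as the compact-cut ones: the two sequences coincide rather than being distinct as in Case 1, and the single list \eqref{eq28} is exhaustive. Your aside that $p=q/2$ ``arises in both cases'' is also incorrect --- it belongs only to Case 1 --- which suggests the parity bookkeeping needs to be redone along the lines above; separately, the distinctness of \eqref{eq26} and \eqref{eq27} in Case 1, which you flag but do not prove, follows from a short parity contradiction using that $n$ is odd.
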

\begin{proof} The proof is straightforward and outlined in Appendix B.\end{proof}
\begin{proposition}
The degree of the discriminant of the $G$-functions, for all values of $(p,q)$, is given by
\begin{equation}
\label{eq29}
    \deg{\Delta(z)}=2m(2m-1)-m,
\end{equation}
in Case 1, and
\begin{equation}
\label{eq30}
    \deg{\Delta(z)} = m(m-1)-\frac{1}{2}(m-1),
\end{equation}
in Case 2, where $m$ is the number of sheets for the function.
\end{proposition}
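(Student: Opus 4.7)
The plan is to compute $\deg\Delta(z)$ by determining the Puiseux block structure of $G^X_Y(z)$ at $z=\infty$ and then applying a degree formula for the discriminant of an algebraic function. By Proposition~4.2, the $G$-function has $M$ sheets with $M=2m$ in Case~1 and $M=m$ in Case~2. On each sheet $K$ I would write the large-$z$ expansion
\[
G_K(z)=\alpha_Kz+\beta_K+\gamma_Kz^{-1}+O(z^{-2}),
\]
and note that by linearity of (\ref{eq19}) each coefficient sequence $\{\alpha_K\},\{\beta_K\},\ldots$ satisfies the same difference equation (\ref{eq20}) with characteristic roots $\{1,e^{i\theta},e^{-i\theta}\}$, giving closed-form sinusoidal formulas whose integration constants are fixed by the termination data of Proposition~4.2 and the physical-sheet data.

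I would then argue that sheets joined across the semi-infinite cut $C_\infty$ form Puiseux pairs at $z=\infty$: the branch of $C_\infty$ at infinity induces a $z^{1/2}$ branching between the two sheets on opposite sides. Combining this with the alternating $C_F/C_\infty$ monodromy pattern of Proposition~4.1 and the termination conditions of Proposition~4.2, the sheets decompose at infinity as follows: in Case~1, $m$ Puiseux blocks of size~$2$ (so the number of blocks is $r=m$); in Case~2, $(m-1)/2$ blocks of size~$2$ together with one unramified block of size~$1$ (so $r=(m+1)/2$), the lone block reflecting the absence of infinite-cut termination when $n$ is even.

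For a monic polynomial $P(G,z)\in\mathbb{C}[z][G]$ whose $M$ roots decompose at infinity into $r$ Puiseux blocks of sizes $\{e_i\}$ with distinct leading asymptotic constants across blocks, the discriminant has degree
\[
\deg\Delta=\sum_i(e_i-1)^2+2\sum_{i<j}e_ie_j=M(M-2)+r,
\]
where the first term comes from within-block pairs (a size-$e$ block contributes $z^{(e-1)^2}$, arising from the Vandermonde of $e$-th roots of unity combined with the subleading $z^{1-1/e}$ Puiseux term) and the second from cross-block pairs (each contributing $z^2$). Substituting the block data then yields $2m(2m-2)+m=2m(2m-1)-m$ in Case~1 and $m(m-2)+(m+1)/2=m(m-1)-(m-1)/2$ in Case~2, recovering (\ref{eq29}) and (\ref{eq30}).

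The main obstacle is in justifying the Puiseux pairing at infinity: one must show that sheets connected by $C_\infty$ genuinely remain Puiseux-paired at $z=\infty$ (rather than splitting into smooth sheets that only coincidentally share the same $\alpha_K$), enumerate the pairings globally using the termination structure of Proposition~4.2, and verify that distinct blocks have distinct leading $\alpha$'s (so the $+2$ cross-pair contribution holds for all such pairs). A useful sanity check is the explicit $q=2$ ($m=2$, Case~1) and $q=3$ ($m=3$, Case~1) constructions of \cite{Atkin2016}, where the Puiseux structure is accessible by direct computation and gives $\deg\Delta=10$ and $27$ respectively.
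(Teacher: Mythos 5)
Your strategy is in substance the same as the paper's Appendix C: the ``Puiseux block'' bookkeeping at $z=\infty$ is exactly the paper's observation that the coefficient $\alpha_K$ of $z$ coincides precisely for the two sheets joined across $C_\infty$ (so their difference is ${\rm const}\cdot\sqrt z$, contributing $z^1$ to $\Delta$), while every other pair contributes $z^2$; your formula $\deg\Delta=M(M-2)+r$ with $(M,r)=(2m,m)$ in Case 1 and $(M,r)=(m,\frac{1}{2}(m+1))$ in Case 2 reproduces that count, and the arithmetic agrees with \eref{eq29} and \eref{eq30}.

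The problem is that the step you set aside as ``the main obstacle'' is the actual content of the proof, and the proposition claims the result for \emph{all} allowed $(p,q)$, so it cannot be deferred. Concretely, three things are missing. (i) Distinctness of the leading coefficients across blocks: the paper expresses every sheet of $G^{(p)}_Y$ as an explicit linear combination of $\{z,\,G^X_Y(z)_{-1,0,1}\}$ (Appendix A), obtains the closed form \eref{eqA11} for $\alpha^{(p)}_K$, and shows that a coincidence $\alpha^{(p)}_{2K+1}=\alpha^{(p)}_{2K'+1}$ would force $p-1=\sin\big(\half(K+K'+3)\theta\big)/\sin\big(\half(K+K'+1)\theta\big)$, which, compared against the allowed $p$-series \eref{eq26}--\eref{eq28}, has no solutions with $K\ne K'$ in the physical sheet range; it also uses $\alpha^{(p)}_{-1}=\alpha^{(p)}_0=0$ with all other $\alpha^{(p)}_L\ne 0$. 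Appealing to ``integration constants fixed by termination data'' does not rule out an accidental equality of $\alpha$'s between non-adjacent sheets, which would lower the degree below the claimed value, so this trigonometric check is indispensable. (ii) For general $p$ you must also know that the number of sheets of $G^{(p)}_Y$ equals that of the $p=1$ function (this follows from the linear relations of Appendix A together with Proposition 4.2, but it is an input to your block count, not a consequence of it). (iii) You need the subleading $z^{1/2}$ coefficient within each $C_\infty$ pair to be non-vanishing, so that a size-2 block really contributes $z^{(2-1)^2}$ rather than something smaller. With these verifications supplied, your block enumeration ($m$ pairs in Case 1; $\frac{1}{2}(m-1)$ pairs plus one unpaired sheet in Case 2) and the degree formula close the argument exactly as in the paper.
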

\begin{proof} The proof is straightforward and outlined in Appendix C.
\end{proof}

Lastly, we outline what happens in the $q\geqslant 4$ cases. Now the sheet structure does not terminate and so a solution cannot be written in terms of algebraic functions. Nevertheless a solution can be obtained for the $q=4, p=1$ function by explicitly solving a double Riemann-Hilbert problem in terms of elliptic functions \cite{Zinn-Justin2000,Kazakov1999,Wenzel1967};  one then finds that the value of $p$ is unconstrained. In the case of $q> 4$ we find an ever increasing weighting for $C_F$ and $C_\infty$ which rules out any possibility of algebraic solutions. 

\section{Properties of the General Solution}
\subsection{Allowed Values of $p$ and the $q=3$ Boundary States}

By construction, the general $(q,p)$ solution allows a $p=1$ boundary function. However, our results show that other integer values of $p$ for a given allowed value of $q$ are not necessarily allowed even when $p<q$.
\begin{enumerate}
    \item $p=2$ is not allowed if $\theta=\nu\pi=\frac{n\pi}{m}$, with $n$ even, $n<m$ mutually prime (Case 2). From \eref{eq28} this would imply
\beq
\label{eq31}
1=\frac { \sin(M+1)\theta}{\sin M\theta}\Rightarrow
\theta =\frac{2\ell+1}{2M+1}\frac {\pi}{2},
\eeq
which is a contradiction.
\item $p=2$ is allowed if  $\theta=\nu\pi=\frac{n\pi}{m}$, with $n$ odd, $n<m$ mutually prime (Case 1). It appears in the series when $C_F$ vanishes \eref{eq26} if $m$ is odd in which case $M=\frac{m-1}{2}$, and it appears in the series for $C_\infty$ \eref{eq27} if $m$ is even in which case $M=\frac{m}{2}-1$.
\item $p=3$ is allowed only if $q=3$. We do not have a simple analytic proof of this, but have checked every case of $\theta=\frac{n\pi}{m}$ up to $m=170$. 
\end{enumerate}
There are two further values of $p$ which appear generically but are not in general integers:
\begin{enumerate}
    \item $p=q$ is always allowed. For Case 1, set $M=0$ in the $C_\infty$ series. For Case 2, set $M=\frac{m+1}{2}$ (in the $C_F$ series).
    \item $p=q/2$ occurs for Case 1:  if $m$ is even set $M=\frac{m}{2}$ in the $C_F$ series; if $m$ is odd set $M=\frac{m-1}{2}$  in the $C_\infty$ series. It is not allowed for Case 2 as is easily proved by contradiction.
\end{enumerate}

\subsection{Critical Exponents}

The critical point is the point in parameter space where, for the $G_Y^X(z)$ functions, the branch point corresponding to $C_\infty$ collides with a branch point of $C_F$. For the case of the $G_X^Y(z)$ functions, criticality is attained when a branch point collides with a critical point (i.e. where the derivative vanishes), and the square root singularity degenerates into a $(x-x_0)^{\frac{3}{2}}$ or higher order behaviour.

When we we have finite sheeted functions, the critical exponents can be computed directly from the discriminant,
\begin{equation}
\label{eq33}
\Delta(z):=\prod_{i< j} \big(G_Y^X(z)_i - G_Y^X(z)_j\big)^2.
\end{equation}
We use the following lemma, adapted from \cite{Dickey2003}.
\begin{lemma}
Let $z_0$ be a branch point of order $p$ over $z_0$, then the sum of orders of all branch points over some $z$ equals the multiplicity of $z_0$ as a root of the discriminant. Thus, the sum of orders of all branch points in the finite part of the Riemann surface equals the degree of the polynomial $\Delta(z)$.
\end{lemma}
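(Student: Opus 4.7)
The plan is to carry out a local Puiseux analysis at each branch point, compute the local order of vanishing of the discriminant $\Delta(z)$ there, and then assemble the contributions to recover the global degree.

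First, I would recall that $\Delta(z)=\prod_{i<j}(G^X_Y(z)_i - G^X_Y(z)_j)^2$ is a polynomial in $z$ whose zeros are precisely those $z_0\in\mathbb{C}$ over which two or more sheets of $G^X_Y$ coincide. At such a $z_0$, the Riemann surface decomposes locally into a collection of branch cycles of ramification indices $p_1,\ldots,p_r$ (with the non-ramified sheets corresponding to $p_\alpha=1$). For each cycle of length $p\geqslant 2$ I would introduce the local uniformiser $w=(z-z_0)^{1/p}$ and express the $p$ sheets permuted by monodromy as
\[
G^X_Y(z)_k = f(\zeta^k w), \qquad k=0,1,\ldots,p-1,\qquad \zeta=e^{2\pi i/p},
\]
with $f$ holomorphic at the origin.

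Second, I would compute the local contribution to $\Delta$ from this cycle. Generically $f'(0)\neq 0$, so
\[
f(\zeta^i w)-f(\zeta^j w) = f'(0)(\zeta^i-\zeta^j)\,w + O(w^2)
\]
vanishes to exact order $1$ in $w$; the squared product of differences over $0\leqslant i<j\leqslant p-1$ therefore contributes $2\binom{p}{2}=p(p-1)$ to the order in $w$, which converts, via $w^p = z-z_0$, to an order $p-1$ in $(z-z_0)$. Factors of $\Delta$ pairing a sheet of this cycle with a sheet lying in a different cycle (or with an unramified sheet) remain generically non-vanishing at $z_0$ and contribute nothing. Identifying $p-1$ with the order of the branch point, the multiplicity $\nu_{z_0}(\Delta)$ equals the sum of $(p_\alpha-1)$ over all branch cycles lying over $z_0$, which is the first assertion of the lemma.

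Third, the global statement follows by summing: since $\Delta(z)$ is a polynomial and all of its zeros occur at points over which branching takes place,
\[
\deg\Delta \;=\; \sum_{z_0\in\mathbb{C}}\nu_{z_0}(\Delta) \;=\; \sum_{\text{finite branch points}}\!(\text{order}).
\]
The main obstacle is the non-generic situation in which $f'(0)$ vanishes, or in which two different branch cycles over the same $z_0$ coincidentally take a common value of $G^X_Y$ at $z_0$; then the naive count $p-1$ underestimates the local vanishing of $\Delta$. This is handled by \emph{defining} the order of a branch point to be the local order of vanishing of the corresponding factors of the discriminant (which reduces to $p-1$ in the generic case and agrees with the standard ramification contribution appearing in the Riemann--Hurwitz formula). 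With this convention the local identity holds tautologically at each $z_0$, and the global identity is then obtained by summing multiplicities. One should also check that no spurious zero of $\Delta$ sits over an unramified base point, which is immediate because any zero of $\Delta$ forces at least two sheets to merge there, producing a branch point.
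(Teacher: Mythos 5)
Your proposal is correct and follows essentially the same route as the paper's proof: a local Puiseux expansion at each branch point, the computation that a cycle connecting $k$ sheets contributes $\frac{2}{k}\binom{k}{2}=k-1$ to the vanishing order of $\Delta$ at $z_0$ (the paper writes this as order $p$ for $p+1$ sheets), and summation over all finite branch points to get $\deg\Delta$. The only difference is that you treat the non-generic situations (vanishing leading Puiseux coefficient, coincident values of distinct cycles over the same $z_0$) explicitly by convention, which the paper simply assumes away.
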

\begin{proof}. According to the theorem on symmetric functions, the discriminant is a polynomial in $z$. Given a branch point of order $p$ over $z_0$, then $G_Y^X(z)$ on $p+1$ sheets will have the form 
\begin{equation}
\label{eq34}
G_Y^X(z)_i = a_0 + b(z-z_0)^{1/(p+1)}+\cdots,
\end{equation}
\noindent where the different branches of the $(p+1)$th root correspond to different $i$. There can also be other branch points over $z_0$. For all of them $a_0$ will be different. The contribution of the branch point to the discriminant is $(z-z_0)^{\frac{2}{p+1} \frac{(p+1)p}{2}}=(z-z_0)^p$. 
\end{proof}

By hypothesis, at criticality we have one central branch cut, about which the function behaves as $(z-z_0)^{r/s}$ for $r,s\in\mathbb{Z}$. The contribution of this branch point to the discriminant is then $(z-z_0)^{\frac{2r}{s}\sum_{i<j}} = (z-z_0)^{\frac{r}{s}n(n-1)}$, where $n$ is the number of sheets this branch point connects. Meanwhile we have contributions from the un-collided square-root branch cuts. For a given $n$ there will be $(n-2)/2$ of these for Case 1, and $(n-1)/2$ for Case 2. 
In Case 1, by the previous lemma, the degree of the discriminant will be given by
\begin{equation}
\label{eq35}
\deg \Delta(z) = \bigg(\frac{2m-2}{2}\bigg) + \frac{r}{s}2m(2m-1),
\end{equation}
while for Case 2, the degree of the discriminant will be given by
\begin{equation}
\label{eq36}
\deg \Delta(z) = \bigg(\frac{m-1}{2}\bigg) + \frac{r}{s}m(m-1).
\end{equation}
where $m$ is defined as in Proposition 4.2.

From \eref{eq29} and \eref{eq30} in the preceding section we have the degree of the discriminant, evaluated in the finite part of the Riemann surface. Comparing both, we find that the critical exponent in Case 1 is given by
\begin{equation}
\label{eq37}
\frac{r}{s} = \frac{2m-1}{2m},
\end{equation}
and in Case 2 it is given by
\begin{equation}
\label{eq38}
\frac{r}{s} = \frac{m-1}{m},
\end{equation}
which depends only on the number of sheets of the function. The string exponent is defined in terms of the resolvent as
\begin{equation}
    \label{eq39}
    W_{(p)}(z-z_c)\sim(z-z_c)^{1-\gamma_s}.
\end{equation}
Due to \eqref{eq13} the singular behaviour of $G_{(p)}^Y(z)$ about $C_F$ is the same as the behaviour of $W_{(p)}(z)$, and so we can relate our result for the former into a statement about the latter. Therefore we recover the known results $\gamma_s = -\frac{1}{2}, -\frac{1}{3}, -\frac{1}{5}$ for $q=1,2,3$, in agreement with the values found in \cite{Eynard1999}.

\section{The 3-State Potts Model and Kramers-Wannier Duality}

For the particular case of the 3-state Potts model, $n=1,\,m=3$,  there are four allowed values, $p=1,\, 3/2,\, 2, \,3$. The integer values of $p$ have a straightforward interpretation as boundary functions for which one, two or  three spin values are allowed respectively. However the $p=3/2$ case  has no easy physical interpretation that is local in terms of the microscopic theory.

The critical point of the model on a fixed lattice is described by a $(6,5)$ conformal field theory (CFT);  this is not part of the minimal series but has an extra $W_3$ symmetry and conserved current related to the $Z_3$ symmetry of the Potts model \cite{Fateev:1987vh}. The CFT has physical boundary states, Cardy states, corresponding to the boundary conditions that are invariant under boundary-preserving conformal transformations  \cite{Cardy:1984bb}. Affleck et al \cite{Affleck1998} showed that there are  eight such boundary states in this case. Seven states are accounted for by boundary conditions with a) fixed spins (three states, $Z_3$ triplet), b) mixed spins (three states,  $Z_3$ triplet), c) free spins (one state,  $Z_3$ singlet). They labelled the  eighth state, which is a  $Z_3$ singlet, as the `New' boundary condition.  The corresponding microscopic lattice picture involves negative Boltzman weights in the statistical mechanical model and lacks a simple intuitive physical interpretation (the quantum spin chain is more straightforward in this respect).

 It is tempting to conjecture that the $p=3/2$ state is, in the scaling limit, the random lattice analogue of the New boundary condition.  To confirm this, or otherwise, would require the extension of our methods to compute cylinder amplitudes. In the rest of this section we investigate the relationships between the boundary conditions in the original model, including the New boundary condition, and its dual. 
 
 \subsection{Ising Model}

Kramers-Wannier duality, originally described for 
the Ising model on flat lattices, provides a relationship between respectively the high-temperature and low-temperature expansions of the partition function on a lattice and its dual, see \cite{Kogut1979} for a review. On random lattices Kramers-Wannier duality was first studied in \cite{Carroll1996,Carroll1996a}. The situation is quite different from the flat lattice because the coordination numbers of the two lattices are radically different; one is finite (three in the case of the models considered in this paper), while the other is unconstrained.

\begin{figure}[h]
\centering
    \includegraphics[scale=0.5]{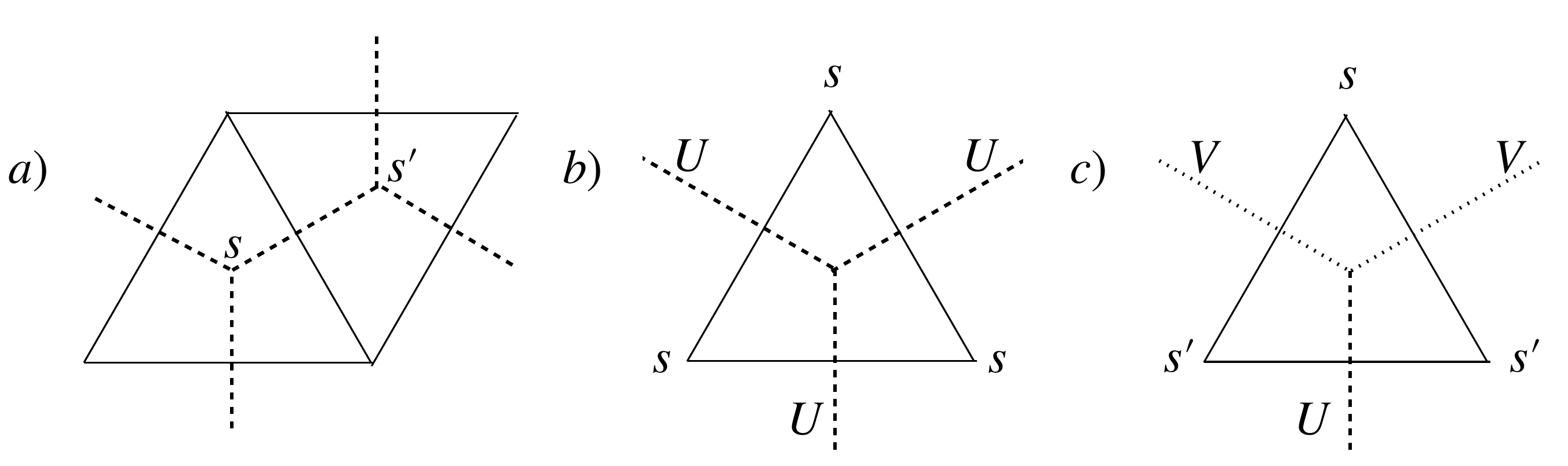}
    \caption{ \label{fig:duality} a) $\mathcal{M}_2$: spins $s,s'$ living on the faces of $T$; b) $\hat{\cal M}_2$: the vertex in $\hat T$ when all spins on the vertices of a triangle in $T$ take the same value; c) $\hat{\cal M}_2$: the vertex in $\hat T$ for $q=2$ when all spins on the vertices of a triangle in $T$ do not take the same value. }
\end{figure} 

In the graphs generated by (\ref{eq1}), the spins are located at the face of each polygon (triangle) in $T$, see Fig.~\ref{fig:duality}a. We associate weights 1 to an edge in the dual graph, $\hat T$, connecting identical spins, and $e^{-2\beta}$ to an edge connecting different spins. Starting with the Ising model, $q=2$, these weights are reproduced by the quadratic term in the matrix model action
\begin{equation}
   \mathcal{S}_2=\text{Tr}\bigg(\frac{1}{2}\frac{1}{1-e^{-4\beta}}(X_1^2 + X_2^2 - 2e^{-2\beta}X_1X_2) -\frac{g}{3}(X_1^3 +X_2^3 )\bigg),
\end{equation}
while the cubic term generates the vertices of $\hat T$; we refer to this model as $\mathcal{M}_2$.
By redefining the couplings and scaling the matrices, we recover the form given in \eqref{eq1}. In the dual theory, $\hat{\cal M}_2$, the Ising spins are located on the vertices of the triangles, rather than the faces. 
These combinatorics can be described by a new matrix model defined as follows:
\begin{enumerate}
    \item An edge connecting two vertices with the same spin carries weight 1 and is represented by a matrix $X$ in the dual lattice, Fig.\ref{fig:duality}b.
    \item An edge connecting two vertices with opposite spins carries weight $e^{-2\hat{\beta}}$ and is represented by a matrix $W$   in the dual lattice, Fig.\ref{fig:duality}c.
    \item There are two types of cubic vertex on the dual lattice corresponding to the cases when the three spins on the surrounding triangle are equal or not, see Figs.\ref{fig:duality}b \& 2c.
\end{enumerate}
This leads to the action
\begin{equation}
    \hat{\mathcal{S}}_2 = \text{Tr}\bigg(\frac{1}{2} X^2 + \frac{e^{2\hat{\beta}}}{2}W^2 - \frac{\hat{g}}{3}(X^3 + 3 X W^2)\bigg).
\end{equation}
Boundary configurations for the two models are different as shown in Fig \ref{fig:boundary}.  In $\hat{\cal M}_2$ a boundary condition is given by the spin values on the vertices of triangles lying on the boundary of $T$; whereas in ${\cal M}_2$ 
the spins are specified on the vertex of order one on the edge in $\hat T$ dual to the boundary edge of $T$.
\begin{figure}[h]
\centering
    \includegraphics[scale=0.5]{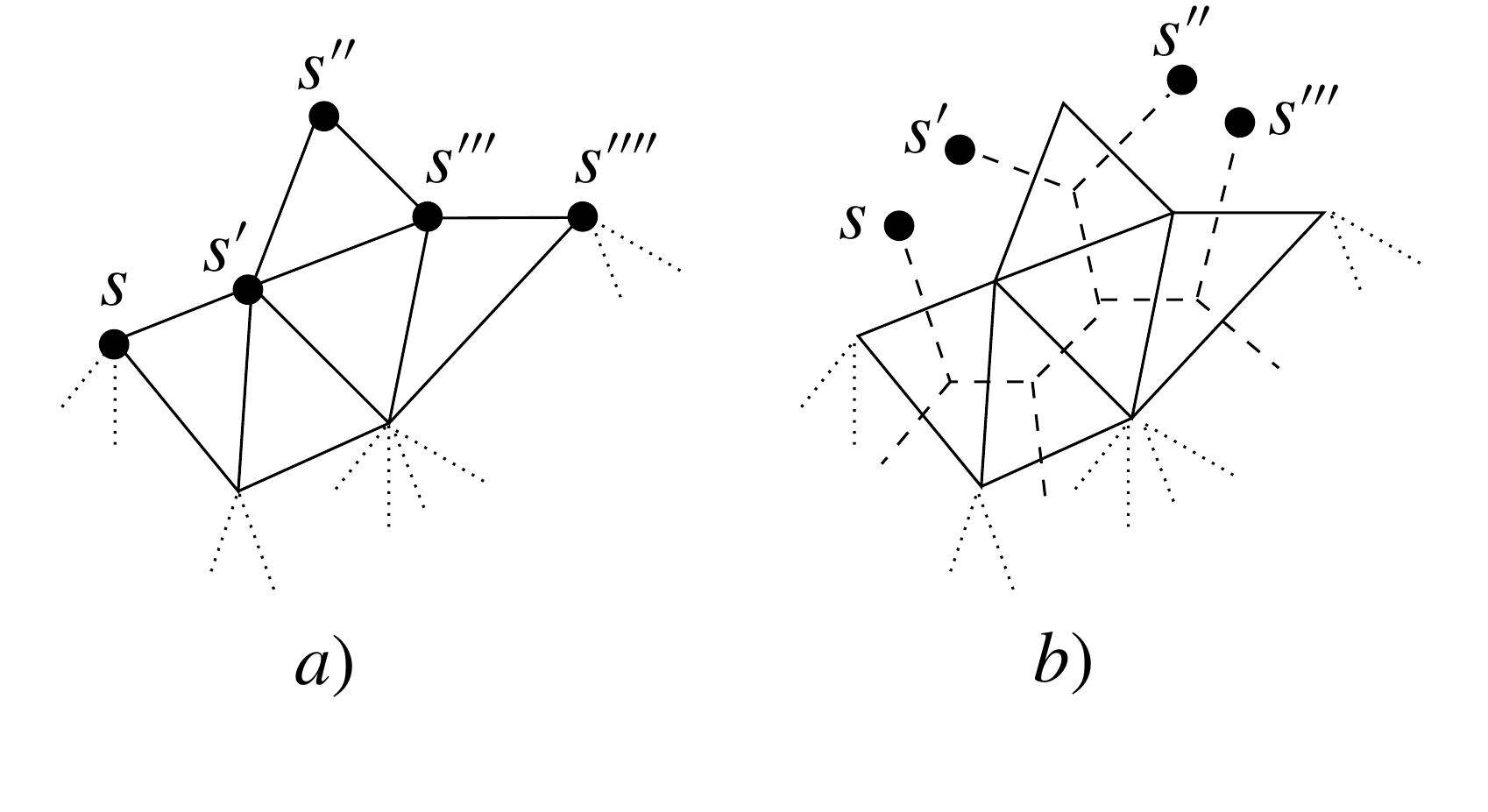}
    \caption{ \label{fig:boundary} Boundary spin configurations in a) $\hat{\cal M}_2$ and  b) ${\cal M}_2$. }
\end{figure} 

$\mathcal{S}_2$ and $\hat{\mathcal{S}}_2$ are related by the change of variables \cite{Carroll1996a} 
\begin{equation}
    X\rightarrow \frac{\lambda}{\sqrt{2}}(X_1+X_2),\quad W \rightarrow \frac{\lambda}{\sqrt{2}}(X_1-X_2), \quad \hat{g}\rightarrow \frac{\lambda^{-3}}{\sqrt{2}} g,
\end{equation}
where 
\begin{equation}
   \lambda=(1+e^{-2\beta}),\mathrm{~and~} \tanh\hat\beta =e^{-2\beta}.
\end{equation}
Note that the relationship between $\beta$ and $\hat\beta$ is the same as for a fixed lattice. The partition functions of ${\cal M}_2$ and $\hat{\cal M}_2$ are therefore equivalent but the relationship is more subtle for graphs with boundaries. The resolvent for the free boundary condition in $\mathcal{M}_2$, $W_{(2)}(z)$ can be calculated in the dual picture by computing the resolvent  $W_X(z)$, which describes the \emph{fixed} boundary condition in $\hat{\mathcal{M}}_2$, with the action $\hat S$. The two resolvents have the same scaling properties as each other and  as the fixed spin resolvent $W_{(1)}(z)$ \cite{Carroll1996a}. Similarly the resolvent  $W_W(z)$, describing a boundary with alternating spins in $\hat{\mathcal{M}}_2$, is equivalent in ${\mathcal{M}}_2$ to $W_{X_1-X_2}(z)$  which describes a system with an applied imaginary boundary magnetic field weighting the boundary spins by $e^{i\frac{\pi s}{2}}$. Note that this resolvent allows only boundaries of length 0 mod 2 and has a different scaling exponent from $W_{(1)}(z)$; in fact the resolvent $W_{Z_\alpha}(z)$ where
\begin{equation}
    Z_\alpha=(1-\alpha)X_1-(1+\alpha)X_2=W-\alpha X
\end{equation}
can be computed explicitly \cite{Carroll:1997tr,Atkin2011} and for $\alpha\ne 0$ has the same scaling exponent  as $W_{(1)}(z)$ so we conclude that the boundary condition described by $W$ is unstable in the infra-red. This is also the case for the fixed lattice Ising model, and consistent with the fact that the fixed and free boundary conditions appear as Cardy states in the $c=\half$ boundary CFT, but the alternating spin boundary condition does not \cite{Cardy:1989ir}.

\subsection{3-state Potts Model}

We can easily extend these ideas to the Potts model. The matrix model ${\mathcal{M}}_3$ for spins defined on the faces of the triangulations has action
\begin{equation}
     {\mathcal{S}}_3 = \text{Tr}\bigg(\frac{\mu(c)}{2}(\sum_{i=1}^3 M_i^2 - 2cM_1 M_2 -2c M_1 M_3 - 2c M_2 M_3) - \frac{g}{3}(\sum_{i=1}^3 M_i^3)\bigg),
\end{equation}
where 
\begin{equation}
    \mu(c)=\frac{(1-c)}{(1+c)(1-2c)}\;{\mathrm {and}}\; c=\frac{1}{e^\beta +1}.
\end{equation}
The Boltzmann weights are then
\begin{equation}\label{pottsboltzman}
    e^{\beta(\delta_{\sigma_k,\sigma_l}-1)} = \begin{cases}
      1 & \text{if $\sigma_k = \sigma_l$}\\
      \frac{c}{1-c} & \text{if $\sigma_k \neq \sigma_l$}
    \end{cases}      
\end{equation}
as on the flat lattice.

To construct the dual matrix model, map the spins on the vertices of the triangulation $T$ to phase factors $s\in\{1,\omega, \omega^2\}$, where $\omega=e^{i\frac{2\pi}{3}}$. To connect adjacent spins, we use the following matrices: $U$, which increases the phase by $\frac{2\pi}{3}$, $U^\dagger$ which increases the phase by $\frac{4\pi}{3}$ and $X$ which preserves the phase. The $X$ matrix has Boltzmann weight 1, since it connects identical spins, whereas $U$ and $U^\dagger$ have Boltzmann weight $e^{-\hat\beta}$, as they connect differing spins. Since the change in phase is dependent on the direction in which we go along the edge, the quadratic part of the action that measures the contribution from neighbouring vertices with different spins must be proportional to $UU^\dagger$. The cubic vertices can be derived by generalising Figs.\ref{fig:duality}b \& 2c to $q=3$.
These considerations lead to the following action for the dual model $\hat{\mathcal{M}}_3$
\begin{equation}
     \hat{\mathcal{S}}_3= \frac{1}{2}\text{Tr}(X^2 + 2e^{\hat\beta}U U^\dagger) - \hat{g}X(U U^\dagger + U^\dagger U) - \frac{\hat{g}}{3}(U^3 + U^{\dagger3} + X^3).
\end{equation}
As in the Ising model, ${\mathcal{M}}_3$ and $\hat{\mathcal{M}}_3$ are related through  a linear transformation of the matrix variables:
\begin{align}\label{PottsDualRelns}
    X = & \frac{\lambda}{\sqrt{3}}(M_1+M_2+M_3),   \quad U=  \frac{\lambda}{\sqrt{3}}(M_1+\omega M_2+\omega^2 M_3), \\ 
    U^\dagger =  &\frac{\lambda}{\sqrt{3}}(M_1+\omega^2 M_2+\omega M_3)
    ,  \quad \hat{g}=\frac{\lambda^{-3}}{\sqrt{3}}g, \quad  \lambda=\sqrt{\frac{1-c}{1+c}}, \nonumber
\end{align}
with
\begin{equation}
    e^{\hat\beta} =1+\frac{3}{e^\beta -1},
\end{equation}
the same relationship between temperature and dual temperature as  for the flat lattice.
Note that both ${\mathcal{S}}_3$ and $\hat{\mathcal{S}}_3$ inherit the permutation symmetry of the original Potts model through permutations  of $\{X_1,X_2,X_3\}$ and the exchange of $U$ and $U^\dagger$ respectively.

The loop functions we are interested in do not all have the same straightforward form in the dual model. For the fixed spin boundary in $\hat{\mathcal{M}}_3$, all spins on the boundary vertices are the same, see Fig.\ref{fig:pottsdualboundaries}a, which  corresponds to the resolvent $W_X(z)$; this is equivalent by \eref{PottsDualRelns} to the free boundary condition in ${\mathcal{M}}_3$. The free spin boundary condition in $\hat{\mathcal{M}}_3$ has graphs of the form Fig.\ref{fig:pottsdualboundaries}c which correspond to the resolvent  $W_{X+U+U^\dagger}$; this is equivalent by \eref{PottsDualRelns} to the free boundary condition in ${\mathcal{M}}_3$. 
\begin{figure}[h]
\centering
    \includegraphics[width=\textwidth]{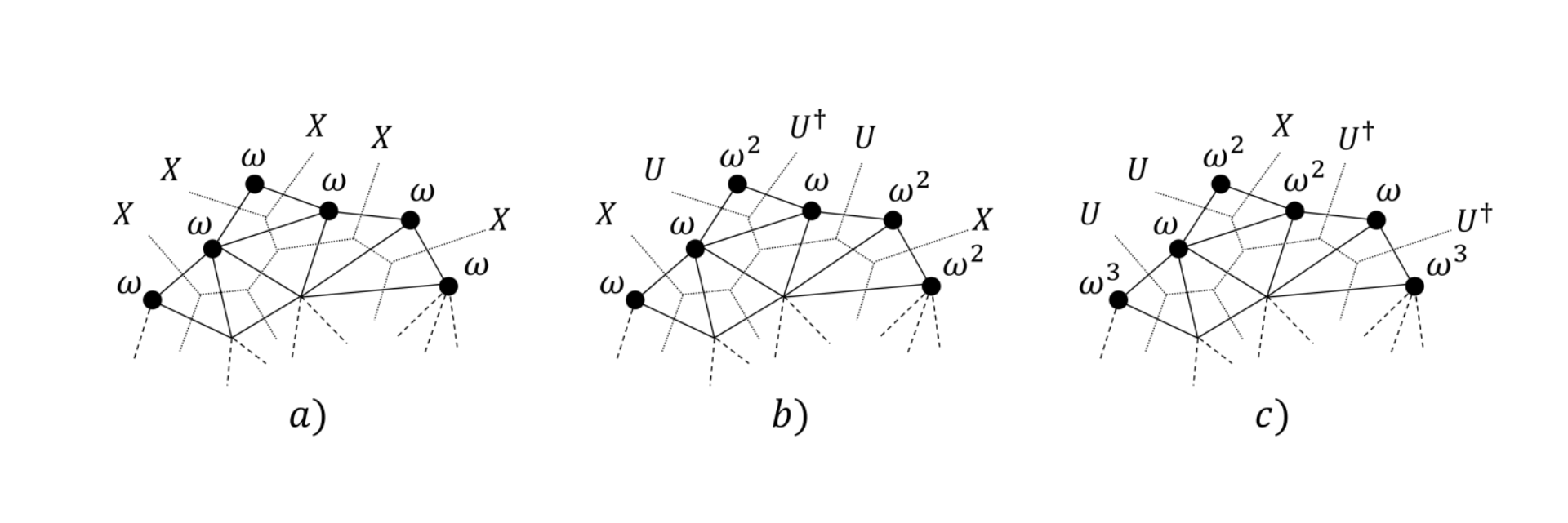} \caption{Examples of diagrams corresponding to various boundary conditions in $\hat{\mathcal{M}}_3$; a) Fixed spin, b) Mixed spin, c) Free spin}\label{fig:pottsdualboundaries}
\end{figure} 
The mixed boundary condition in $\hat{\mathcal{M}}_3$, Fig.\ref{fig:pottsdualboundaries}b, is expected to be dual to the New boundary condition in ${\mathcal{M}}_3$. 
The generating function for these  graphs is slightly harder to construct. To ensure that, for example, only $1$ and $\omega$ appear on the boundary $U$ can only be followed by $X$ or $U^\dagger$ and not another $U$, and there must be an equal number of $U$ and $U^\dagger$ for periodicity.
The loop function that accounts for all these features is
\begin{equation}
\label{newBoundary}
W_{\text{mixed}}(z)=    \frac{1}{N} \langle \text{Tr} \frac{1}{(z-(X+U\frac{1}{z-X}U^{\dagger})}\rangle.
\end{equation}
\noindent Using the matrix model technology available at the present moment we do not know how to compute this function. 

In \cite{Affleck1998} it was argued that the New boundary condition corresponds to deleting the Boltzman weights \eref{pottsboltzman} for edges of the graph belonging to the boundary and instead assigning the weights
\begin{equation}\label{pottsNew}
    w_{kl} = \begin{cases}
      1 & \text{if $\sigma_k = \sigma_l$,}\\
      -\frac{1}{2} & \text{if $\sigma_k \neq \sigma_l$.}
    \end{cases}      
\end{equation}
\begin{figure}[h]
    \centering
    \includegraphics[scale=0.42]{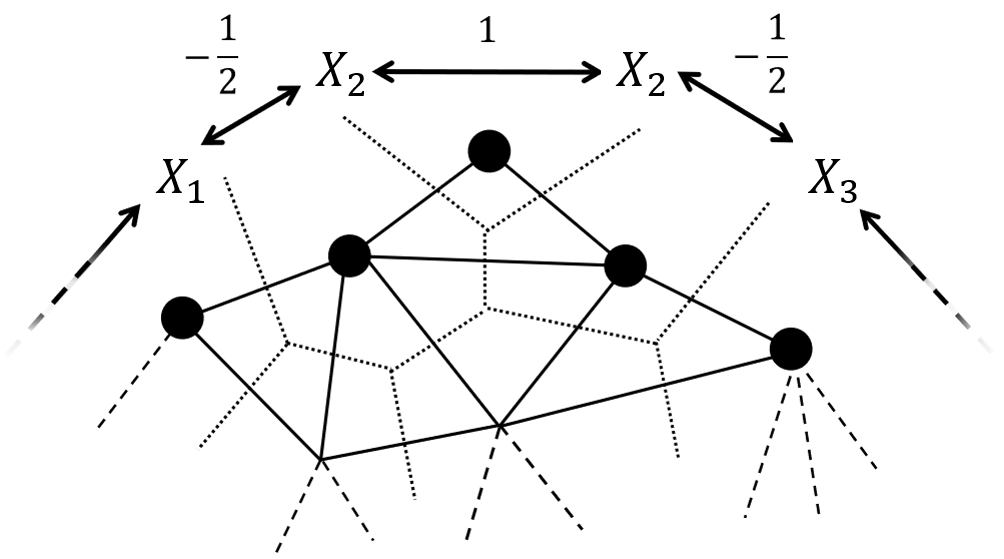}
    \caption{Example diagram in $\mathcal{M}_3$ corresponding to $\langle \text{Tr} \, \cdots X_1 X_2 X_2 X_3 \cdots \rangle$ with associated boundary interactions represented by solid lines between the spins $X_i$.}
    \label{fig:edge}
\end{figure}
For example, as shown in Fig.\ref{fig:edge}, neighbouring spins on the boundary of a given triangulation will contribute a factor of $-1/2$ when they differ and $1$ when they are the same. Therefore the amplitude will possess a weight of $(-1/2)^2$ due to the boundary interactions of this segment. This procedure is straightforward to implement in $\hat{\mathcal{M}}_3$ and gives the loop function
\begin{equation}
W_{\text{New}}(z)=    \frac{1}{N} \langle \text{Tr} \frac{1}{z-(X-\half U-\half U^{\dagger})}\rangle
\end{equation}
which, by \eref{PottsDualRelns}, is equivalent to the mixed boundary condition in ${\mathcal{M}}_3$. 

Similarly, it is possible to use the $S_3$ symmetry to show that this implementation holds when we map \eqref{PottsDualRelns} in $\hat{\mathcal{M}}_3$ to $\mathcal{M}_3$. To see this, we first note that the asymptotic expansion of \eqref{newBoundary} generates a restricted sum of length $n$ words in the free algebra generated by $\{X,U,U^\dag \}$. We then map these allowed words using \eqref{PottsDualRelns} to a weighted sum of words in the free algebra generated by $\{M_1,M_2,M_3\}$. Permutation symmetry with respect to exchange of $M_2,M_3$ means that only the real part of the expectation value of these words contribute. This leads to the following proposition:

\begin{proposition}
Under the mapping
\begin{equation}
    X = \sum_{\sigma \in \Delta} M_{\sigma}, \, U = \sum_{\sigma \in \Delta} \omega^{\sigma-1} M_{\sigma}, \, U^{\dag} = \sum_{\sigma \in \Delta} \omega^{1-\sigma} M_{\sigma},
    \label{propMap}
\end{equation}
the real part of the sum of allowed length $n$ words in the free algebra generated by $\{X,U,U^\dag\}$ maps to the following sum of length $n$ words in the free algebra generated by $\{M_1,M_2,M_3\}$,
\begin{equation}
\label{newWord}
   2^{n-1}\sum_{\sigma_1,\cdots,\sigma_n\in \Delta}\bigg(- \frac{1}{2}\bigg)^{\sum_{k=1}^{n}(1-\delta(\sigma_k,\sigma_{k+1}))}M_{\sigma_1}\cdots M_{\sigma_n}, \quad \sigma_{n+1} = \sigma_1,
\end{equation}
where $\Delta=\{1,2,3\}$. 
\end{proposition}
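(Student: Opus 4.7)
The plan is to reduce the sum of allowed words to a two-state transfer-matrix product, exploit a rank-one factorisation of each transfer matrix to obtain a closed form for the resulting scalar, and then carefully take the real part using the cyclic identity on the differences $\sigma_{k+1}-\sigma_k$. First I would substitute \eqref{propMap} into a single word $W_1\cdots W_n$ with $W_k\in\{X,U,U^\dag\}$, giving
\begin{equation*}
W_1\cdots W_n=\sum_{\sigma_1,\ldots,\sigma_n\in\Delta}c(W_1,\sigma_1)\cdots c(W_n,\sigma_n)\,M_{\sigma_1}\cdots M_{\sigma_n},
\end{equation*}
with $c(X,\sigma)=1$, $c(U,\sigma)=\omega^{\sigma-1}$, $c(U^\dag,\sigma)=\omega^{1-\sigma}$. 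The claim then reduces to showing $\mathrm{Re}\,S(\sigma)=2^{n-1}(-\tfrac{1}{2})^{d(\sigma)}$ for every fixed tuple $(\sigma_1,\ldots,\sigma_n)$, where $S(\sigma):=\sum_{W\text{ allowed}}\prod_k c(W_k,\sigma_k)$ and $d(\sigma)=|\{k:\sigma_k\neq\sigma_{k+1}\}|$ is the cyclic count of differences.

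The allowed words generated by the expansion of \eqref{newBoundary} are precisely concatenations of the basic blocks $X$ and $UX^mU^\dag$ for $m\geq 0$; equivalently, a depth counter that increments on $U$ and decrements on $U^\dag$ must start and end at $0$, with $U$ permitted only at depth $0$ and $U^\dag$ only at depth $1$. This characterisation yields $S(\sigma)=[T_1T_2\cdots T_n]_{0,0}$ with
\begin{equation*}
T_k=\begin{pmatrix}1&\omega^{\sigma_k-1}\\ \omega^{1-\sigma_k}&1\end{pmatrix},
\end{equation*}
indexed by depth $\{0,1\}$. Each $T_k$ is rank one: $T_k=a_kb_k^T$ with $a_k=(\omega^{\sigma_k-1},1)^T$ and $b_k=(\omega^{1-\sigma_k},1)^T$, and $b_k^Ta_{k+1}=1+\omega^{\sigma_{k+1}-\sigma_k}$. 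The product telescopes to $T_1\cdots T_n=\prod_{k=1}^{n-1}(b_k^Ta_{k+1})\,a_1b_n^T$, so reading off the $(0,0)$ entry gives
\begin{equation*}
S(\sigma_1,\ldots,\sigma_n)=\omega^{\sigma_1-\sigma_n}\prod_{k=1}^{n-1}\bigl(1+\omega^{\sigma_{k+1}-\sigma_k}\bigr).
\end{equation*}

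It then remains to take the real part. Writing $1+\omega^D=2\cos(\pi D/3)e^{i\pi D/3}$, each interior factor contributes modulus $2$ when $\sigma_k=\sigma_{k+1}$ and unit modulus with phase $\pm\pi/3$ otherwise, while the boundary prefactor $\omega^{\sigma_1-\sigma_n}$ contributes phase $0$ or $\pm 2\pi/3$. The cyclic identity $\sum_{k=1}^n(\sigma_{k+1}-\sigma_k)\equiv 0\pmod{3}$ constrains the total phase, and a case split on whether $\sigma_n=\sigma_1$ verifies that the total phase is always $0$ or $\pi$ modulo $2\pi$ in exactly the right combination to give $\mathrm{Re}\,S(\sigma)=2^{n-1-d(\sigma)}(-1)^{d(\sigma)}=2^{n-1}(-\tfrac{1}{2})^{d(\sigma)}$. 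The main obstacle will be precisely this phase bookkeeping: one must check that the interior $\pm\pi/3$ contributions and the boundary $\pm 2\pi/3$ contribution combine, via the cyclic constraint, to cancel the imaginary part and produce the predicted sign, which is cleanest to verify by splitting into sub-cases according to $\sigma_1-\sigma_n\bmod 3$ together with the parities of the counts of positions where $\sigma_{k+1}-\sigma_k\equiv 1$ or $2\pmod{3}$.
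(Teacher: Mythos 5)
Your argument is correct and reaches the same key closed form as the paper, but by a genuinely different route. The paper encodes the allowed words through the recursion $W_n = XW_{n-1}+\sum_{k\geqslant 2}UX^{k-2}U^\dag W_{n-k}$, converts it into a recurrence for the coefficients $C_n(\sigma_1,\dots,\sigma_n)$, and proves by induction that $C_n=(1+\omega^{\sigma_1-\sigma_2})C_{n-1}$, whence $C_n=\prod_{k=1}^{n-1}(1+\omega^{\sigma_k-\sigma_{k+1}})$. Your depth-indexed transfer matrix with the rank-one factorisation $T_k=a_kb_k^T$ telescopes directly to $S(\sigma)=\omega^{\sigma_1-\sigma_n}\prod_{k=1}^{n-1}(1+\omega^{\sigma_{k+1}-\sigma_k})$, which is identically equal to the paper's product since $\omega^{\sigma_1-\sigma_n}=\prod_{k=1}^{n-1}\omega^{\sigma_k-\sigma_{k+1}}$ distributes one factor of $\omega^{\sigma_k-\sigma_{k+1}}$ into each bracket; your route replaces the induction by a one-line linear-algebra identity and makes the origin of the product structure transparent. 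One slip in your final step: the total phase of $S(\sigma)$ is \emph{not} always $0$ or $\pi$ modulo $2\pi$. Writing $1+\omega^{D}=2\cos(\pi D/3)\,e^{i\pi D/3}$, when $\sigma_1=\sigma_n$ the constraint $n_+-n_-\equiv 0\pmod 3$ does force $S$ to be real with the stated sign, but when $\sigma_1\neq\sigma_n$ one gets $n_+-n_-\equiv\pm1\pmod 3$ and $S$ is genuinely complex with phase $\pm\pi/3$ modulo $\pi$ (e.g.\ $n=2$, $\sigma=(1,2)$ gives $S=e^{-i\pi/3}$); the factor $\cos(\pi/3)=\tfrac12$ in the real part is exactly what supplies the $n$-th, cyclic, factor of $-\tfrac12$ in \eqref{newWord}, and the parity bookkeeping you describe ($n_+-n_-\equiv n_++n_-\pmod 2$) then fixes the sign to $(-1)^{d(\sigma)}$. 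With that correction your final formula $\mathrm{Re}\,S=2^{n-1}(-\tfrac12)^{d(\sigma)}$ is verified; note the paper itself leaves this last real-part computation to the reader, so you are, if anything, more explicit about what remains to be checked.
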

\begin{proof}
Let $W_n$ be the sum of allowed length $n$ words in $\{V,U,U^\dag\}$. Then for $n>0$,
\begin{equation}
    W_n = X W_{n-1} + \sum_{k=2}^n U X^{k-2} U^\dag W_{n-k},
\end{equation}
where we take $W_0=1$. Under \eqref{propMap} we have
\begin{equation}
    W_n = \sum_{\sigma_1,\cdots\sigma_n\in\Delta}C_n(\sigma_1,\cdots,\sigma_n)M_{\sigma_1}\cdots M_{\sigma_n}.
\end{equation}
The recurrence relation on the words can then be expressed as a recurrence relation on the coefficients $C_n(\sigma_1,\cdots \sigma_n)$,
\begin{equation}
\label{coRecurrence}
    C_n(\sigma_1,\cdots,\sigma_n)=C_{n-1}(\sigma_2,\cdots\sigma_n)+\sum_{k=2}^n \omega^{\sigma_1 - \sigma_k} C_{n-k} (\sigma_{k+1},\cdots,\sigma_n),
\end{equation}
where $C_1(\sigma)=C_0=1$.
We can now prove by induction that for $n>1$ this recurrence relation is equivalent to \footnote{We would like to thank an anonymous contributor to Mathematics StackExchange for drawing attention to this \href{https://math.stackexchange.com/questions/3379597/formula-for-the-sum-of-words-in-a-3-letter-algebra}{feature}.}
\begin{equation}
\label{ind}
    C_n(\sigma_1,\cdots,\sigma_n) = (1+\omega^{\sigma_1-\sigma_2})C_{n-1}(\sigma_2,\cdots,\sigma_n).
\end{equation}
For $n=2$ the recurrence relation gives
\begin{align}
    C_2(\sigma_1,\sigma_2) & = C_1(\sigma_2)+\omega^{\sigma_1-\sigma_2}C_0 \\
    & = (1+\omega^{\sigma_1-\sigma_2})C_1(\sigma_2).
\end{align}
Now assume the result holds up to some $n=l$. Then we must show that \eqref{coRecurrence} is equivalent to \eqref{ind} for $n=l+1$. By repeatedly applying this hypothesis we find
\begin{align}
    \omega^{\sigma_1-\sigma_2}C_{l}(\sigma_2,\cdots,\sigma_{l+1}) & = \omega^{\sigma_1-\sigma_2}C_{l-1}(\sigma_3,\cdots,\sigma_{l+1}) + \omega^{\sigma_1 - \sigma_3}C_{l-1}(\sigma_3,\cdots,\sigma_{l+1})\\
    & = \omega^{\sigma_1 - \sigma_2}C_{l-1}(\sigma_3,\cdots,\sigma_{l+1)}+\cdots  + \omega^{\sigma_1-\sigma_{k+1}}C_0 \\
    & = \sum_{k=2}^{l+1} \omega^{\sigma_1 - \sigma_{k}}C_{l+1-k}(\sigma_{k+1},\cdots,\sigma_{l+1}).
\end{align}
Hence the proposition holds for $n=l+1$. Thus, since it also holds for $n=2$, it is true for all $n$. 
Using this result we can easily solve the recursion,
\begin{equation}
    C_n(\sigma_1,\cdots,\sigma_n)=\prod_{k=1}^{n-1}(1+\omega^{\sigma_k-\sigma_{k+1}}).
\end{equation}
By taking the real part of this, we recover \eqref{newWord}.
\end{proof}
\noindent Therefore we find that the mixed boundary condition in $\hat{\mathcal{M}}_3$ is equivalent to a boundary condition in $\mathcal{M}_3$ with the weights for spins on the boundary given by \eqref{pottsNew}.

It is interesting to examine the boundary conditions generated by $W_U(z)$ and $W_{U^\dagger}(z)$; these are the ${\cal Z}_3$ analogues of the boundary condition generated by $W_W(z)$ in the Ising case. Evaluating the  resolvent series expansion and using the underlying $S_3$ symmetry under exchange of the matrices shows that only boundaries of length $0 \mod 3$ appear, analogous to the appearance of boundaries of length $0 \mod 2$ in $W_W(z)$. Crucially, as we observed above, this boundary condition has a different scaling dimension and flows to the free boundary condition under $Z_2$ symmetric boundary perturbations \cite{Atkin2011}. We expect the same behaviour to hold for the $q=3$ case; as soon as a perturbation $\epsilon X$ is added the restriction on boundary length disappears and the boundary condition will flow to the free boundary condition in the infra-red.

\section{Discussion}

We have classified all algebraic solutions for the disk partition function of the $q$-state Potts model coupled to planar random graphs/2D gravity with $p$ allowed spins on the boundary. We found that there is a discrete series of allowed values of $q$ that admit a $p=1$ boundary condition, in agreement with the same calculation using other methods \cite{Bernardi2011}, and the flat lattice result \cite{Itzykson1989}. For each of these we then determined the discrete series of allowed values of $p$ and we computed the discriminant of the general solution, using it to analyse the critical behaviour of the theory, and finding agreement with \cite{Eynard1996}. In the case of $q=3$, the $p=q/2$  boundary condition is in addition to the usual $p=1,2,3$ solutions. We conjectured that this extra boundary condition may play the role of the New boundary condition that appears in the fixed lattice spin system, but have not established such a connection which could only be true in the scaling limit as the $p=3/2$ boundary condition has no simple microscopic interpretation. 
We examined the Kramers-Wannier dual theory for $q=3$, extending the work of \cite{Carroll1996a} for $q=2$, and showed that the relationship between boundary conditions in the original and dual theories on a fixed lattice \cite{Fuchs1998} extends to the same statistical system on a random lattice. Within this framework we discovered that there is one boundary condition in the theory and the dual that is not a simple resolvent and we argued that this is the New boundary condition in the original theory, and the mixed boundary condition in the dual.

There are several ways we could extend this work. Firstly, it would be interesting to calculate resolvents in the dual theory explicitly. The matrix model is a mixed Hermitian-Complex matrix model with a particular structure that has not been studied in the literature. However, as the 3-state Potts matrix model is solvable, it is reasonable to expect that the dual theory is also a solvable matrix model and, in particular, that its mixed boundary condition loop function, $W_{\mathrm{mixed}}(z)$,  can be computed. Computing the dual loop functions would directly allow us to study their scaling behaviour and enable direct comparison with the loop functions of the original model. 

Secondly it would be interesting to calculate cylinder functions on the random lattice to  enable confirmation of our identification of boundary conditions, as in \cite{Atkin2011}. This would require extending our results to higher order in topology and pushing beyond the planar limit, as done for the $O(n)$ model in \cite{Borot2011},  using the algebraic curves found here as the starting point of the topological recursion procedure \cite{Eynard2003,Eynard2007}. 

Finally, we would like to understand the $q/2$ condition itself and determine what it corresponds to physically, if anything. If it is not the New boundary condition, then there is no apparent analogue for the fixed lattice. Discrepancies between fixed lattice and random lattice results have been observed in the past for the $O(n)$ model; in the strong coupling regime, corresponding to $n>2$ and central charge greater than one, new exotic critical points were discovered \cite{Durhuus1997,Eynard1996} which were not known to exist on the flat lattice. The Potts model seems to be related to the $O(n)$ model \cite{Eynard1992,Eynard1996}. We have found no new critical behaviour for $q<4$ in the $(q,p)$ parameter space of the $q$-state Potts model, but have not excluded an analogue of the new critical behaviour of the $O(n)$ model in the corresponding $q>4$ strong coupling regime.

\ack
We are grateful to A. Nahum for useful discussions. The work of AK is supported by the  STFC grant ST/N504233/1. 

\appendix

\section{$G$-function coefficients for arbitrary $p$ and $q$}

Going back to \eref{eq12} and \eref{eq13} we can eliminate $W_Y(z)$ between these two equations but keep $p$ general to find
\beq 
\label{eqA1}
\GP{p}{0}= \frac{p}{q}\big(z+   \GO{0}-\GO{1}\big)  + \bigg(1- \frac{2p}{q}\bigg) W_Y(z)_+ \, . 
\eeq
But $W_Y(z)_+$ also appears in \eref{eq16} so can be eliminated, and we obtain 
\beq 
\label{eqA2}
(q-2)\GP{p}{0}=(p-1)\big(z-\GO{1}\big) +(q-p-1) \GO{0} \, .
\eeq
Note that $q=2$ has to be dealt with separately but one can show that the results we will obtain are true even in that case. Circulating $C_F$ gives
\beq 
\label{eqA3}
(q-2)\GP{p}{1}=(p-1)\big(z-\GO{0}\big) +(q-p-1) \GO{1} \, .
\eeq
 Note that  formula \eref{eqA2} is trivially correct for $p=1$ and that
\begin{align}
\label{eqA4}(q-2)\GP{q-p}{0}&=(q-p-1)\big(z-\GO{1}\big) +(p-1) \GO{0}\\
	&= (q-2)z -(p-1)\big(z-\GO{0}\big) -(q-p-1)\GO{1} \nonumber\\
	&= (q-2)\big(z-\GP{p}{1}\big) ,\nonumber
\end{align}
which is the correct duality relationship between $p$ and $q-p$ functions, observed in \cite{Atkin2016}.

Going back to \eref{eqA2} circulating cuts in the sequence  $C_F, C_\infty, C_F, C_\infty,\ldots$ generates sheets with positive labels, so following on from \eref{eqA3} we have
\bea 
\label{eqA5}
(q-2)\GP{p}{2}&=&(p-1)\big(z-\GO{-1}\big) +(q-p-1) \GO{2} \nonumber\\
	(q-2)\GP{p}{3}&=&(p-1)\big(z-\GO{-1}\big) +(q-p-1) \GO{3} \nonumber\\
	(q-2)\GP{p}{4}&=&(p-1)\big(z-\GO{0}\big) +(q-p-1) \GO{4}\\
	\vdots\nonumber\\
	(q-2)\GP{p}{K}&=&(p-1)\big(z-\GO{K-4}\big) +(q-p-1) \GO{K} \, .\nonumber
\eea
And circulating cuts in the sequence  $C_\infty, C_F, C_\infty,C_F, \ldots$ generates sheets with negative labels, so
\bea 
\label{eqA6}
(q-2)\GP{p}{-1}&=&(p-1)\big(z-\GO{2}\big) +(q-p-1) \GO{-1}\nonumber\\
	(q-2)\GP{p}{-2}&=&(p-1)\big(z-\GO{3}\big) +(q-p-1) \GO{-1}\nonumber\\
	(q-2)\GP{p}{-3}&=&(p-1)\big(z-\GO{4}\big) +(q-p-1) \GO{0}\\
	\vdots\nonumber\\
	(q-2)\GP{p}{-K}&=&(p-1)\big(z-\GO{K+1}\big) +(q-p-1) \GO{K-3} \, .\nonumber
\eea
Note that, because the $\GO{K}$s and $y_K={\rm const}\cdot z $ all satisfy \eref{eq20}, so do the $\GP{p}{-K}$ and the only difference is the initial data. Therefore all $\GP{p}{-K}$ can be expressed as linear combinations of $\{z,\GO{-1,0,1} \}$; and the coefficients, $\rho^{(p)}_K$,  $\delta^{(p)}_K$ and $\alpha^{(p)}_K$  of the discontinuous parts and of $z$ respectively can be found.

Using these relationships it is straightforward to find that
\bea 
\label{eqA7}
\rho^{(p)}_{2M}&=&\frac{ (1-p)\sin M\theta+\sin(M+1)\theta}{\sin\theta}=-\rho^{(p)}_{2M+1},\\
\label{eqA8}	\delta^{(p)}_{2M+1}&=&-\frac{ (1-p)\sin (M+\half)\theta+\sin(M+\threehalves)\theta}{\sin\half\theta}=-\delta^{(p)}_{2M+2},\\
\label{eqA9}	 \rho^{(p)}_{-2M-1}&=&\frac{ (1-p)\sin (M+1)\theta+\sin M\theta}{\sin\theta}=- \rho^{(p)}_{-2M-2},\\
\label{eqA10}	\delta^{(p)}_{-2M}&=&-\frac{ (1-p)\sin (M+\half)\theta+\sin(M-\half)\theta}{\sin\half\theta}=-\delta^{(p)}_{-2M-1},
\eea
for $M=0,1,2,\ldots$.  For the coefficients of $z$ we get
\bea 
\label{eqA11}
\alpha^{(p)}_{2M +1}&=&\frac{-(p-2)\cos\half\theta+(p-1)\cos(M+\half)\theta   -\cos(M+\threehalves)\theta}{2\sin\half\theta\sin\theta},\nonumber\\
\alpha^{(p)}_{2M +1}&=&\alpha^{(p)}_{2M+2}, \qquad M=0,\pm 1,\pm 2\ldots\\
\alpha^{(p)}_{-2M }&=&\alpha^{(p)}_{-2M-1}.\nonumber
\eea

\section{Proof of Proposition 4.2}

For general $p$  the sheet structure can terminate: on an even positive label sheet in which case $\rho^{(p)}_{2M}=0$; on an odd positive label sheet  in which case $\delta^{(p)}_{2M+1}=0$; on an odd negative label sheet in which case $\rho^{(p)}_{-2M-1}=0$; on an even negative label sheet  in which case $\delta^{(p)}_{-2M}=0$. Clearly to be finite sheeted it must terminate on both a positive and a negative label sheet. The $\rho$'s and $\delta$'s have some general properties which help in classifying everything:
\begin{enumerate}
\item The absolute values of $\rho^{(p)}_{2M}$ etc have period $m$ in $M$. We need therefore only consider $0\le M < m$ in determining whether $\GP{p}{}$ is finite sheeted.
\item Note that if $\rho^{(p)}_{2M}=0$ then
\bea 
\label{eqB1}
\rho^{(p)}_{-2(m-M-1)-1}&=&\frac{ (1-p)\sin (m-M)\theta+\sin (m-M-1)\theta}{\sin\theta}, \nonumber\\
					&=&\frac{ (1-p)\sin (n\pi-M\theta)+\sin (n\pi -(M+1)\theta)}{\sin\theta},\\
					&=& (-1)^{n+1}\rho^{(p)}_{2M} =0 \, . \nonumber
\eea
Similarly if $\delta^{(p)}_{2M+1}=0$ then $\delta^{(p)}_{-2(m-M-1)}=0$. It follows that if the structure terminates at a positive label sheet it will definitely also terminate at a negative label sheet (though these results do not necessarily identify the lowest label sheet on which the structure actually terminates).
\item In Case 2 observe that 
\begin{align}
\label{eqB2}
\rho^{(p)}_{2(M+1)}&=(-1)^{n/2}\frac{ (1-p)\sin ((M+1)\frac{n\pi}{m}-\frac{n\pi}{2})+\sin ((M+2)\frac{n\pi}{m}-\frac{n\pi}{2})}{\sin\theta},\nonumber\\
&=(-1)^{n/2}\frac{ (1-p)\sin ((M+1-\frac{m}{2})\frac{n\pi}{m})+\sin ((M+2-\frac{m}{2})\frac{n\pi}{m})}{\sin\theta},\\
&=(-1)^{n/2}\frac{ (1-p)\sin ((M-\frac{m-1}{2} +\half)\frac{n\pi}{m})+\sin ((M-\frac{m-1}{2} +\threehalves)\frac{n\pi}{m})}{\sin\theta}.\nonumber
\end{align}
But $m$ is odd so $k=\frac{m-1}{2}$ is an integer and we find
\bea 
\label{eqB3}
\rho^{(p)}_{2(M+1)}=(-1)^{n/2}\delta^{(p)}_{2(M-k)+1},
\eea
and similarly, that
\bea 
\label{eqB4}
\rho^{(p)}_{-2M-1}=(-1)^{n/2}\delta^{(p)}_{-2(k-M)}.
\eea
It follows that termination on even or odd sheets generates the same set of $p$ values for which $\GP{p}{}$ is finite sheeted; whether it actually terminates on an even or an odd sheet is determined by which one has lower label.

\end{enumerate}

\subsection{\emph{Case 1}  $\theta=\nu\pi=\frac{n\pi}{m}$, $n < m$ are mutually prime,  $n=1,3,\ldots $ is odd, $m$ may be even or odd}

By properties (ii) it follows that all $p$ values for which $\GP{p}{}$ is finite sheeted are given by
$\rho^{(p)}_{2M}=0$, $M>0$ or by $\delta^{(p)}_{2M+1}=0$, $M\ge 0$ giving two sequences
\bea \label{eqB5}{~~~ S_1:~}p&=&1+\frac { \sin(M+1)\theta}{\sin M\theta},\qquad M=1,\ldots m-1\\
\label{eqB6}\text{or}{~ S_2:~}p&=&1+\frac { \sin(M+\threehalves)\theta}{\sin( M+\half)\theta},\qquad M=0,\ldots m-1
\eea
where $S_1$, for $M=1,\ldots m-1$, terminates on sheet $2M$, ie $2,4,\ldots 2(m-1)$ and \\
$S_2$, for $M=0,\ldots m-1$, terminates on sheet $2M+1$, ie $1,3,\ldots 2m-1$.

Note that $S_1$ and $S_2$ are distinct, so we can find the negative label on which they terminate  by inserting the values of $p$ from \eref{eqB5} and \eref{eqB6} respectively into the expressions $\rho^{(p)}_{-2M'-1}$ \eref{eqA9} and $\delta^{(p)}_{-2M'}$ \eref{eqA10} and solving for $M'$. This gives\smallskip\\
$S_1$ $M=1,\ldots m-1$ terminates on sheet $-2(m-M)+1$, ie $-2m+3,-2m+5,\ldots -3,-1$,\smallskip\\
$S_2$ $M=0,\ldots m-1$ terminates on sheet $-2(m-M-1)$, ie $-2m+2,-2m+4,\ldots -2,0$.

Now we can show that $S_1\ne S_2$ are distinct. If they were not then we would have for some $M,M'$
\bea 
\label{eqB7} \sin( M+\threehalves)\theta \;\sin M'\theta &=& \sin( M+\half)\theta\;\sin(M'+1)\theta,\\
\Rightarrow (2(M-M') +1)\frac{n\pi}{m}&=&2\ell\pi,
\eea
which is a contradiction because $n$ is odd.

\subsection{\emph{Case 2}  $\theta=\nu\pi=\frac{n\pi}{m}$, $n< m$ are mutually prime and $n=2,4, \ldots m-1$ is even}
By properties (ii) and (iii) it follows that all $p$ values for which $\GP{p}{}$ is finite sheeted are given by
$\rho^{(p)}_{2M}=0$, $M>0$ so in this case there is only one sequence given by
\beq 
\label{eqB8}
p=1+\frac { \sin(M+1)\theta}{\sin M\theta},\qquad M=1,\ldots m-1
\eeq
(note that the special case $p=1$ is $M=m-1$). 

By property (iii) we see for $M=\frac{m+1}{2} ,\ldots m-1$ that $\delta^{(p)}$ is zero for a lower labelled sheet than $\rho^{(p)}$, so for positive labels
\begin{itemize}
    \item $M=1\ldots\frac{m-1}{2}$  terminates~on~sheet $2M$,  ie $2,4,\ldots m-1$\smallskip\\
    \item $M=\frac{m+1}{2}\ldots m-1$  terminates~on~sheet $2M-m$  ie $1,3,\ldots m-2$
\end{itemize}

Using property (i) we  see that for negative labels
\begin{itemize}
    \item $M=1\ldots\frac{m-1}{2}$  terminates~on~sheet  $-m+2M+1$  ie $-m+3,\ldots 0$\smallskip\\
    \item $M=\frac{m+1}{2}\ldots m-1$  terminates~on~sheet $-2(m-M)+1$  ie $ -m+2,\ldots,-3,-1$
\end{itemize}

\section{Proof of Proposition 4.3}
We can examine the $\alpha_K$ \eref{eqA11}, the coefficient of $z$ in the recurrence relation for the $G$-functions, and determine the  degree of the discriminant.
\begin{enumerate}
\item First note that $\alpha_{-1} =\alpha_{0}=0$ and that in general $ \alpha_{2M +1}=\alpha_{2M+2}$ so that $\GO{2M+2}-\GO{2M+1} = {\rm const}\cdot \sqrt{z} + \text{lower order terms}$.
\item Now we prove that there are no other cases where $ \alpha_{K}=\alpha_{K'}$ in the finite sheet structure. For $ \alpha_{2M+1}=\alpha_{2M'+1}$ we require
\beq \cos(M+\threehalves)\theta=\cos(M'+\threehalves)\theta, \eeq
with $M,M'=-1,\dots m-2$ in case 1, or $M,M'=-1,\dots \half(m-3)$ in case 2. The first possibility is that $M=M'+m$ if $n$ is even or $M=M'+2m$ if $n$ is odd; but then $M$ falls outside the allowed range. The second possibility is that
\beq (M+\threehalves)\theta=2N\pi-(M'+\threehalves)\theta,\qquad N= 1,2,\ldots \eeq
so that 
\beq (M+M'+3)\frac{n}{m} = 2N.\eeq
 In Case 2 this is satisfied if  $M+M'+3=m$ which can only be true if $M=M'= \half(m-3)$ which proves the result. In Case 1 then $M+M'+3=2m$ which can only be true if one of $M,M'$ is $m-1$ which is out  of range and proves the result.
\end{enumerate}

\noindent Now we can compute the leading power of $z$ in the discriminant. In Case 1 
\begin{align} 
\Delta(z) &= \prod_{K<K'} \big( \GO{K}-\GO{K'} \big)^2,\\
		&= \prod_{M=-1}^{m-2} \big( \GO{2M+2}-\GO{2M+1} \big)^2 \prod_{K'-K>1} \big( \GO{K}-\GO{K'} \big)^2.
\end{align}
So for $2m$ sheets we have $m$ terms in the product giving a factor $z$  and  $2m(2m-1)/2-m$ terms giving a factor $z^2$. Therefore
\beq {\rm deg} \Delta(z) = 2m(2m-1)-m.\eeq
In Case 2
\begin{align} 
\Delta(z) &= \prod_{K<K'} \big( \GO{K}-\GO{K'} \big)^2,\\
		&= \prod_{M=-1}^{\half(m-5)} \big( \GO{2M+2}-\GO{2M+1} \big)^2 \prod_{K'-K>1}\big( \GO{K}-\GO{K'} \big)^2,
\end{align}
this time there are $m$ sheets  and we have $\half(m-1)$ terms in the product giving a factor $z$
and  $m(m-1)/2-\half(m-1)$ terms giving a factor $z^2$.
Therefore
\beq {\rm deg} \Delta(z)= m(m-1)-\half(m-1).\eeq

Next we can examine the $\alpha^{(p)}_K$ and show that the  degree of the discriminant is the same 
as it is for $p=1$ for all values of $p$ for which $\GP{p}{}$ is finite  sheeted.
\begin{enumerate}
\item First note that $\alpha^{(p)}_{-1} =\alpha^{(p)}_{0}=0$ and that $\alpha^{(p)}_{L}\ne 0$ if $L\ne 0,-1$, as is easily shown from (39), and that in general $ \alpha^{(p)}_{2M +1}=\alpha^{(p)}_{2M+2}$ so that $\GP{p}{2M+2}-\GP{p}{2M+1} = {\rm const} \cdot \sqrt{z} + \text{lower order terms}$.
\item It is straightforward to prove that there are no other cases where $ \alpha^{(p)}_{L}=\alpha^{(p)}_{L'}$ in the finite sheet structure. Setting $ \alpha^{(p)}_{2K+1}=\alpha^{(p)}_{2K'+1}$ and using \eref{eqA11} gives
\beq p-1 =\frac {\sin(\half( K +K'+3)\theta)}{      \sin(\half( K +K'+1)\theta )} .  \eeq
Using \eref{eqB5},\eref{eqB6} or \eref{eqB8} as appropriate and considering each in turn, it can be shown there are no $ K\ne K'$ and both in the physical range.
\item For a given $q$ the number of sheets for $\GP{p}{}$ is the same as for $\GO{}$ if $p$ is an allowed value as shown in for Case 1 and Case 2. It follows from this and (i) and (ii) immediately above that the degree of the discriminant is the same as well.

\end{enumerate}

\nocite{*}

\section*{References}

\bibliography{ref.bib}

\end{document}